\newcommand{\columnsversion}[2]{#1}{} 
\newtheorem{theorem}{Theorem}[section]
\newtheorem{lemma}[theorem]{Lemma}
\newtheorem{cor}[theorem]{Corollary}
\newtheorem{assumption}[theorem]{Assumption}
\newtheorem{example}[theorem]{Example}
\newtheorem{defn}[theorem]{Definition}
\newcommand{\algspace}{\text{    } \quad  \text{    }}
\def\squareforqed{\hbox{\rule{2.5mm}{2.5mm}}}
\def\QED{\ifmmode\squareforqed 
  \else{\nobreak\hfil   
    \penalty50                 
    \hskip1em                  
    \null                      
    \nobreak                   
    \hfil                      
    \squareforqed              
    \parfillskip=0pt           
    \finalhyphendemerits=0     
    \endgraf}                  
  \fi}
\def\blksquare{\rule{2mm}{2mm}}
\def\qedsymbol{\blksquare}
\newcommand{\bg}[1]{\medskip\noindent{\bf #1}}
\newcommand{\ed}{{\hfill\qedsymbol}\medskip}
\newenvironment{proof}{\bg{Proof : }}{\ed}
\newenvironment{proofof}[1]{\bg{Proof of #1 : }}{\ed}
\newcommand{\R}{\ensuremath{\mathbb R}}
\newcommand{\comment}[1]{}
{}  
\newcommand{\junk}[1]{}
\newlength{\tmp} \newlength{\lpsx} \newlength{\lpsy} \newlength{\upsx}
\newlength{\upsy}
\newcommand{\email}[1]{\texttt{#1}}
\newcommand{\one}{\mathbbm{1}}
\newcommand{\A}{\ensuremath{\mathcal{A}}}
\newcommand{\checkpoint}[1]{\begin{picture}(0,0)
\put(10,-7){ $\longleftarrow$ point \textbf{#1}}
\end{picture}}
\newcommand{\hi}{{\hat{i}}}
\newcommand{\cp}{{ ($\clubsuit$) }}
\begin{document}

\setcounter{page}{0}

\title{Clinching Auctions Beyond Hard Budget Constraints}

\author{
Gagan Goel\\
       Google Inc., New York\\
       \email{gagangoel@google.com}
\and
Vahab Mirrokni\\
       Google Inc., New York\\
       \email{mirrokni@google.com}
\and
Renato Paes Leme\\
       Google Inc., New York\\
        \email{renatoppl@google.com}
}

\date{}

\maketitle

\begin{abstract}

Constraints on agent's ability to pay play a major role in auction design for
any setting where the magnitude of financial transactions is sufficiently large.
Those constraints have been traditionally modeled in mechanism design as
\emph{hard budget}, i.e., mechanism is not allowed to 
charge agents more than a certain amount. Yet, real auction systems (such as
Google AdWords) allow more sophisticated constraints on agents' ability to pay,
such as \emph{average budgets}. In this work, we investigate the
design of Pareto optimal and incentive compatible auctions for agents with
\emph{constrained quasi-linear utilities}, which captures more realistic models
of liquidity constraints that the agents may have. Our result applies to a very
general class of allocation constraints 
known as polymatroidal environments, encompassing many settings of interest such
as multi-unit auctions, matching markets, video-on-demand and advertisement
systems. 

Our design is based Ausubel's \emph{clinching framework}. Incentive
compatibility and feasibility with respect to ability-to-pay constraints are
direct consequences of the clinching framework. Pareto-optimality, on the other
hand, is considerably more challenging, since the no-trade condition that
characterizes it depends not only on whether agents have their budgets exhausted
or not, 
but also on prices {at} which the goods are allocated. In order to get a handle
on those prices, we introduce novel concepts of
dropping prices and saturation. These concepts lead to our main structural
result which is a characterization of the tight sets in the 
clinching auction outcome and its relation to dropping prices.

\comment{
Recent progress in algorithmic mechanism design for non-quasi-linear settings
has been mostly restricted to hard budget constraints,
where each agent has a \emph{hard budget} and the mechanism is not allowed to
charge him more than that. Yet, real auction systems (such as Google AdWords)
 allow more sophisticated constraints on agents' ability to pay.
In this work, we bridge this gap by designing auctions for players with
\emph{constrained quasi-linear utilities}, which captures more realistic models
of liquidity constraints that the agents may have. Our result applies to a very
general class of allocation constraints
known as polymatroidal environments, encompassing many settings of interest such as multi-unit
auctions, matching markets, video-on-demand and advertisement systems.

The auction itself is a simple twist from the Polyhedral Clinching Auction in Goel, Mirrokni and Paes Leme (STOC'12).
Proving Pareto-optimality in this more general setting, however, is considerably more challenging, since the no-trade
conditions that characterize Pareto-optimality depend not only on whether agents have their budgets exhausted or not,
but also on prices {at} which the goods are allocated. In order to get a handle
on those prices, we introduce novel concepts of
dropping prices and saturation. These concepts lead to our main structural result which is a characterization of the tight sets in the
clinching auction outcome and its relation to dropping prices.}
\end{abstract}

\newpage

\section{Introduction}
An important direction in  mechanism design is to understand how to
design efficient mechanisms when players have constraints {on} their ability to
pay. A first order approximation is to consider \emph{hard budget}
constraints, in which each agent has a budget and the mechanism is not allowed
to charge him more than this amount. While simpler and more theoretically
tractable, hard budgets stand usually as a proxy for more sophisticated
payment constraints.

A recent trend in modern internet marketplaces such as Google AdWords is to
offer the bidders a better control of their spending by allowing them to
express more sophisticated constraints on their ability to pay.
{A popular feature introduced by Google Adwords in 2010 called ``Target CPA
bidding'' allows advertisers to report \emph{average budget} constraints on top
of traditional willingness to pay per item (value) and hard budgets (see
\cite{adwords_blog} for a discussion of this feature in the Google AdWords
blog).}


It is important to emphasize that values, hard budgets and average budgets play
different roles in managing an advertising campaign, or more generally satisfying
buyers' desired goals.
 In order to illustrate this point, consider a marketplace
in which each agent specifies his preferences and gets allocated a certain
quantity of a good (ad impressions, for example) and charged a total amount for
it. Hard budgets are one of  the simplest constraints on the total payment: they specify
an upper bound on the total payment. Average budgets specify  an upper bound on
the ratio of total payment by amount of goods allocated {(or alternatively, a
lower bound on the ROI, return over investment)}.
On the other hand,  individual valuations specify an upper bound on the marginal
payment for each individual item, even if some goods are sold at a lower or higher
price earlier. To see the difference more clearly, consider an initial
outcome where an agent gets some items and pays a certain amount that is below
his average budget. If he is offered an extra item for a price less then his
value but higher than the average budget, he would prefer the outcome with the
extra item as long as the new total payment and allocations don't exceed his
average or hard budget constraints. A natural generalization of average budget
constraint is to consider a concave upper bound on the total payment as a function
of the number of goods allocated.\\

We consider here the problem of designing Pareto-optimal mechanisms for settings
where the players have \emph{general} (concave) constraints on their ability to
pay. This includes hard budgets and average budgets (and combinations thereof),
as well as other more sophisticated constraints on the total payment of agents
as a function of the set of goods allocated to them. For the special class of
hard budgets, a sequence of papers \cite{dobzinski12, fiat_clinching,
henzinger11,goel12, goel13, devanur12} studied this problem
for increasingly complex classes of {\em allocation environments} using Ausubel's
celebrated clinching framework \cite{Ausubel_multi} as the main tool.
Nonetheless, all those results are restricted to hard budgets, and do not handle
more general payment constraints. \comment{To the best of our knowledge, our work is the first result that
goes beyond the simple hard budget constraints, even for the case of multi-unit
auctions.}

\paragraph{Our results and techniques.}
In this paper, we study the constrained quasi-linear model, in which
each agent has a private valuation, and has associated
with it a public\footnote{The assumption that the set of admissible outcomes is
public is necessary. Dobzinski et al \cite{dobzinski12} showed that even for
the special case of multi-unit auctions with hard budget constraints, there is
no incentive compatible, individually rational and Pareto-optimal auction if
budgets are private. Indeed, most papers in the literature on budgets make the
public budgets assumption \cite{dobzinski12, fiat_clinching, henzinger11,
goel12, goel13, devanur12}, including classical references such as Laffont and
Roberts \cite{laffont_robert} and Maskin \cite{maskin00}. }
set of \emph{admissible outcomes} where each outcome is a pair
of allocation and payments. The utilities are then quasi-linear if the outcome
is admissible and minus infinity otherwise.
Our main result is to design an incentive compatible, individually rational and Pareto-efficient auction
that handles a general class of these payment constraints.
\comment{This is first result of this kind that goes beyond the simple hard
budget constraints.} It is worth noting that many attempts to generalize
clinching auctions to other settings such as private budgets \cite{dobzinski12},
or single-parameter concave valuations \cite{goel12} have led to impossibility
results. In light of that, it is somehow surprising to find a extension for
dealing with a general set of payment constraints for which clinching auctions
are flexible, and we get a positive result.

\comment{
For the special class of hard
budget constraints, Dobzinski, Lavi and Nisan \citeyear{dobzinski12} generalized
the celebrated clinching framework  of Ausubel~\citeyear{Ausubel_multi}
and presented the first such auctions for the special class of multi-unit auctions.
Following this paper, there has been a sequence of results presenting such
auctions for more general classes of allocation environments\cite{fiat_clinching, henzinger11,
goel12, goel13, devanur12}; such as matching constraints and polymatroidal constraints.
Nonetheless, all the previous results apply only to the
special class of hard budgets, and does not handle more general  payment constraints.
}

Our result applies to a very general class of allocation constraints
known as polymatroidal environments. Polymatroidal
environments encompass many settings  such as multi-unit auctions, advertisement systems,
matching markets, video on demand (routing), and spanning tree auctions. See
Goel et al~\cite{goel12}
and Bikhchandani et al~\cite{Bikhchandani11}  for a more comprehensive
discussion on applications of polymatroidal environments.

Algorithmically, our auction can be thought of as a variant of the polyhedral clinching auction
in Goel, Mirrokni and Paes Leme \cite{goel12} with a more general
demand function. While applying a variant
of the previously known algorithm, proving that the auction is Pareto-optimal for
more sophisticated payment constraints becomes considerably more challenging,
and requires novel techniques.
The reason is as follows: Pareto optimality is usually characterized by a no-trade
condition, which states that given two players $H$ and $L$ with values
$v_H > v_L$, then for any price $p$, it should not be possible to take goods away
from the $L$ player by paying him at a rate $p$ for the goods taken away and allocating them to the $H$ player
charging him a rate $p$, such that it improves the utility of one of the them without
making the other worse-off. For hard budgets, there are two obstructions to trade:
either $H$ has his budget exhausted in the final solution or the trade violates
the  allocation feasibility constraints, i.e., $H$ is receiving goods to his
maximum capacity. Since neither obstruction depends on the specific price,
one can show that if it is possible to trade at price $p \in [v_L, v_H]$, then it is also possible
to trade at price $p = v_H$, and vice versa. This implies that one needs to check for no-trade
at price $p = v_H$ only, which greatly simplifies the analysis. Now for more general constraints
 (such as average budget constraint), this is not true. Meaning it might be possible to trade at
a price $p$ that is strictly between $v_L$ and $v_H$ but not at price $p = v_H$.
The harder part of our analysis is to get a handle on these prices. In order to do so, we define
the concept of \emph{dropping prices} which serve as an upper bound on the prices for
which trade is possible. We then relate those dropping prices to the feasibility constraints. This
leads to our main structural result which is encapsulated in our \emph{Structure of Tight
Sets Lemma} (Lemma \ref{lemma:structure_tight_sets} and its Corollary
\ref{cor:structure}). We believe that that this lemma exposes an interesting
structure about clinching auctions that can lead to other applications.\\

We believe that an important contribution of our work is to show that the
clinching framework can be applied to general types of payment constraints. For
the special case of hard budgets, clinching has been recently used as a
building block to achieve a variety of objectives: Goel et al \cite{goel13}
use it to design online allocation rules, Devanur, Ha and Hartline
\cite{devanur12} use it as a building block to approximate revenue
in budgeted settings and Dobzinski and Paes Leme \cite{quantitative13}
use it to approximate an efficiency-related objective. We believe that the
ideas in this paper are a first step towards solving other problems (online
allocation rules and revenue extraction, for example) for more general types of
payment constraints.

\paragraph{Related work}
Auction design with constraints on player's ability to pay have been
extensively studied in the literature. Most of the work is devoted to
understand the impact of \emph{hard budget constraints} in standard auctions,
see Che and Gale \cite{che_gale} and Benoit and Krishna
\cite{benoit_krishna},
for example, or optimize the revenue in the presence of budget constraints, as
in Laffont and Roberts \cite{laffont_robert}, Borgs et al
\cite{Borgs05},
Chawla et al \cite{chawla11},  Malakhov and Vohra \cite{malakhov_vohra}
and
Pai and Vohra \cite{pai_vohra}.

The research line of designing Pareto-optimal incentive-compatible 
mechanisms with budget constraints was started by
Dobzinski, Lavi and Nisan \cite{dobzinski12}, who study agents with hard
budget constraints in a multi-unit auctions setting, i.e., there is a limited
supply of identical objects to be sold and the agents have additive valuation
over the objects. They also point out that traditional welfare maximization
is impossible in budgeted settings and establish Pareto-optimality as the
natural efficiency goal for settings with payment constraints. Those ideas were
extended in many different directions in subsequent work: Bhattacharya et al
\cite{Bhattacharya10} study the divisible case and propose budget
elicitation
schemes, Fiat et al \cite{fiat_clinching} study the same problem for
matching
markets, Colini-Baldeschi et al \cite{henzinger11} study the single-keyword
sponsored search setting and Goel et al \cite{goel12} propose a polyhedral
version of this auction that gives a Pareto-optimal auction with hard
budgets for any environment that can be modeled as a polymatroid. Such
settings include sponsored search, matching markets, and routing auctions.
They also prove the impossibility of designing incentive-compatible auctions
for very simple polyhedral domains, beyond polymatroidal environments.

For constraints beyond hard budgets, the design of Pareto efficient auctions
has been restricted to unit demand settings, where each agent can be allocated
at most one item. Aggarwal, Muthukrishnan, Pal and Pal
\cite{Aggarwal09}, Dutting, Henzinger and Weber \cite{dutting11}, Alaei,
Jain
and Malekian \cite{alaei10}, Morimoto and Serizawa
\cite{morimoto_serizawa} use
techniques inspired by the Walrasian equilibrium literature to design
Pareto-efficient auctions. Baisa \cite{Baisa} studied to which extent
efficiency
and revenue can be optimized with minimal assumptions on agent's utilities.

\section{Setting}

\subsection{Constrained quasi-linear utilities}

We consider a natural generalization of budget constrained utilities
which we call \emph{constrained quasi-linear utitilies}. In this utility
model, player $i$ is characterized by a private valuation $v_i$ and a public
\emph{admissible set} $\A_i \subseteq \R^2_+$. Upon getting $x_i$ units of a
divisible and
homogeneous good and paying $\pi_i$ dollars for it, we consider that player
$i$'s utility is:
$$u_i(x_i, \pi_i) = \left\{ \begin{aligned} & v_i \cdot x_i - \pi_i, & & (x_i,
\pi_i) \in
\A_i \\ & - \infty & & \text{otherwise} \end{aligned} \right.$$
In other words, a player behaves
like a quasi-linear player if his outcome $(x_i, \pi_i)$ is admissable and has
minus infinity utility otherwise. For example, budget-constrained utility
functions are characterized by $\A_i = \{(x_i, \pi_i) \in \R^2_+; \pi_i \leq B_i
\}$. Average budget constraints can be represented by $\A_i = \{(x_i, \pi_i) 
\in \R^2_+; \pi_i \leq \beta_i x_i\}$. Generally, we consider any admissible set
of the form $\A_i = \{(x_i, \pi_i) \in \R^2_+; \pi_i \leq \alpha_i(x_i)\}$ for
concave non-decreasing functions \emph{ability-to-pay function} $\alpha_i : \R_+
\rightarrow \R_+$ with $\alpha_i(0) = 0$. 

The condition $\alpha_i(0) = 0$ expresses that a player get zero utility for
the zero allocation and zero payments. The fact that $\alpha_i$ is
non-decreasing expresses that if a player considers a certain outcome 
admissible, it also considers admissible any outcome where he is allocated at
least as much and pays no more then the original outcome. Finally, the concavity
of $\alpha_i$ expresses that if an agent considers certain outcomes admissible,
it considers any distribution (convex combination) of such outcomes also
admissible.

\comment{
Those admissible sets are depicted in
Figure \ref{fig:admissible}.  We say that an admissible set $\A_i$ is
\emph{valid} if it satisfies the following natural properties:
\begin{enumerate}
 \item[(A)] \emph{admissibiliy of the null outcome}: $(0,0) \in \A_i$, i.e., any
player gets zero utility by
not getting any goods and not paying anything.
 \item[(B)] \emph{right/down-wards closeness}: if $(x_i, \pi_i) \in \A_i$
then $(x'_i, \pi'_i) \in \A_i$ for any
$x'_i \geq x_i$ and $\pi'_i \leq \pi_i$, i.e., if a player considers some
outcome admissible, then he also considers admissible outcomes where he gets
allocated no less and pays no more.
 \item[(C)] \emph{convexity}: if $(x_i, \pi_i), (x'_i, \pi'_i) \in
\A_i$ then $(t x_i + (1-t)
x'_i, t \pi_i + (1-t) \pi'_i) \in \A_i$ for any $t \in [0,1]$, i.e., if a player
finds a set of outcomes admissible, then it finds all distributions over such
outcomes also admissible.
 \item[(D)] \emph{topological closeness}: the set $\A_i$ is topologically
closed, i.e., if $(x_i^s, \pi_i^s)
\in \A_i$ are a sequence of admissible outcomes such that $(x_i^s, \pi_i^s)
\rightarrow (x_i^*, \pi_i^*)$, then $(x_i^*, \pi_i^*) \in \A_i$.
\end{enumerate}
}

\begin{figure}
\centering
\includegraphics[scale=.8]{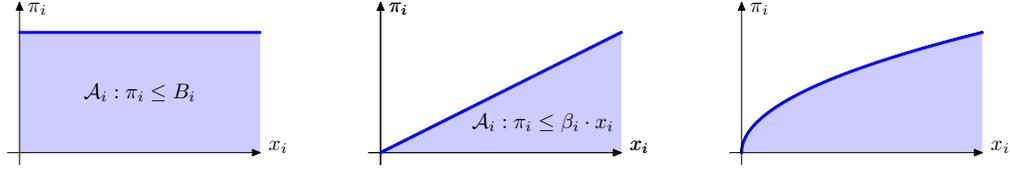}
\caption{Three examples of admissible regions: the first is an example
of \emph{hard budget constraints}, the second of \emph{average budget
constraints} and the third is an example of generic \emph{ability-to-pay}
function. }
\label{fig:admissible}
\end{figure}

\comment{
The set of conditions stated above imply that a valid admissible set can be
characterized by a concave \emph{ability-to-pay function} $\alpha_i(x_i) =
\max \{ \pi_i; (x_i, \pi_i) \in \A_i\}$. It is easy to see that the
conditions above imply that $\A_i = \{(x_i, \pi_i) \in \R^2_+; \pi_i \leq
\alpha_i(x_i)\}$.
}

\subsection{Polyhedral Auctions}

Given $n$ agents where each agent $i$ has a constrained quasi-linear utility
function with
value $v_i$ and a valid admissable set $\A_i$, our main problem is how to
auction a divible good that may be subject to allocation constraints. We assume
that the set of allocations is described by a convex set $P \subseteq \R^n_+$
such that a point $x = (x_1, \hdots, x_n) \in P$ if it is possible to
simultaneously allocate $x_i$ units to each player $i$. We call such a set the
\emph{environment}.

We assume that both the admissible sets $\A_i$ and the set of
feasible allocations $P$ is public information. The private information
of the agents is their value $v_i$ for each unit of the good. An auction is
described by two maps: 
$x:\R^n_+ \rightarrow P$ and $\pi: \R^n_+ \rightarrow \R^n_+$ such that for
each valuation profile $v \in \R^n_+$ it associates an allocation $x(v)$ and
payments $\pi(v)$.  Our goal is to design an auction that satisfied the
following properties:

\begin{itemize}
 \item \emph{admissibility}, i.e., $(x_i(v), \pi_i(v)) \in \A_i$
for each $i \in [n], v \in \R^n_+$.
 \item \emph{incentive compatibility} and
\emph{individual-rationality}: each player's
utility is maximized by reporting their true value regardless of the reports of
other agents. Moreover, he always gets non-negative utility by doing so. The
classic result by Myerson \cite{myerson-81} shows that this is equivalent to
$x_i(v_i, v_{-i})$ being monotone in $v_i$ and the payment rule be such that
$\pi_i(v_i, v_{-i}) = v_i \cdot x_i(v_i, v_{-i}) - \int_0^{v_i} x_i(u,
v_{-i}) du$.
 \item \emph{Pareto-efficiency}: we say that an outcome is
Pareto-efficient if there is no alternative outcome in which all the players'
utilities and auctioneer's revenue do not decrease and at least one increases.
Formally, an outcome $(x,\pi)$ is Pareto-efficient if there is no alternative
outcome $(x', \pi')$ with $x' \in P$, $(x'_i, \pi'_i) \in \A_i, \forall i$,
$v_i \cdot x'_i - \pi'_i \geq v_i \cdot x_i - \pi_i$, $\sum_i \pi'_i \geq \sum_i
\pi_i$ and the sum of those inequalities is strict, i.e., $\sum_i v_i \cdot
x'_i >  \sum_i v_i \cdot x_i$.
\end{itemize}

\subsection{Polymatroidal Environments}

Our results apply to the case where the allocation environment $P$ is a polymatroid. A
\emph{polymatroid} is a packing polytope that can be written as:
$$P = \{ x \in \R^n_+; \textstyle\sum_{i \in S} x_i \leq f(S), \forall S
\subseteq [n] \}$$
for a monotone submodular function $f: 2^{[n]} \rightarrow \R_+$. A
\emph{submodular function} is a set function that has the diminishing marginals
property: $f(S \cup i) - f(S) \geq f(T \cup i) - f(T)$ for any
subsets $S \subseteq T \subseteq [n]$. An equivalent (and somewhat more
traditional) definition is: $f(S \cap T) + f(S \cup T) \leq f(S) + f(T)$ for all
$S,T \subseteq [n]$. We say that this function is \emph{monotone} if $f(S) \leq
f(T)$ for $S \subseteq T \subseteq [n]$.

 Applications of polymatroidal environments are ubiquous : multi-unit auctions
\cite{dobzinski12}, the matching markets \cite{fiat_clinching}, sponsored
search \cite{goel12, henzinger11}, bandwidth markets, scheduling with
deadlines, network planning, video on demand \cite{Bikhchandani11}, among
others. We refer to cite{goel12} and \cite{Bikhchandani11} for a more extensive
discussion on these applications.

\subsection{Average Budget Constraints}

An important special case of constrained utility functions are \emph{average
budget} constraints, in which $\A_i = \{(x_i, \pi_i) \in \R^2_+; \pi_i \leq
\beta_i \cdot x_i \}$, where the parameter $\beta_i$ is called the average
budget. In this particular case, the utility function enforces a constraint
that the player must pay at most $\beta_i$ per unit.

At a first glance, this might seem equivalent to a player being quasi-linear with
value $\tilde{v}_i = \min\{v_i, \beta_i\}$. In order to see the difference,
consider two different agents: (1) agent one has value $v_1 = 2$ and
average budget $\beta_1 = 1$ and (2) agent two is quasi-linear with value $v_2 =
\min \{v_1, \beta_1\} = 1$. Now, consider the outcome with $x_i = 1$ and $\pi_i
= 0$. At this point, consider offering an additional item for each player at
the price of $2$. The first player would gladly take it, since it is below his
value and doesn't violate the average budget constraints, but the second player
wouldn't.

Despite of that, both settings are not completely dissimilar. Consider the
problem of designing an incentive compatible, individually rational
and Pareto-efficient auction to sell one single good to players with average
budgets. It is simple to see that running the Vickrey auction on $\tilde{v}_i =
\min\{v_i, \beta_i\}$ does the trick. Or, more generally:

\begin{lemma}\label{lemma:multi_unit_average}
 For multi-unit auctions, i.e., $P = \{x\in \R^n_+; \sum_i x_i \leq s\}$, the
VCG auction on $\tilde{v}_i = \min\{v_i, \beta_i\}$ is incentive
compatible, individually rational and Pareto-efficient auction.
\end{lemma}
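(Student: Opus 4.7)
My plan is to describe the VCG mechanism on the clipped bids $\tilde v_i=\min\{v_i,\beta_i\}$ explicitly in this multi-unit setting, dispose of admissibility, IC, and IR by short standard VCG arguments, and then prove Pareto efficiency by a tight upper bound on the total payment available to any hypothetical Pareto-improving outcome.

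Because valuations are additive and $P=\{x\in\R_+^n:\sum_i x_i\le s\}$, VCG on the $\tilde v_i$'s allocates all $s$ units to $i^\star\in\mathrm{argmax}_i\,\tilde v_i$ at the uniform per-unit price $\tilde v^{(2)}:=\max_{j\ne i^\star}\tilde v_j$, with ties in $\tilde v$ broken in favor of higher $v_i$. Admissibility is immediate since $\tilde v^{(2)}\le\tilde v_{i^\star}\le\beta_{i^\star}$, so $s\tilde v^{(2)}\le\beta_{i^\star}\cdot s$ lies in $\A_{i^\star}$, and IR follows since $v_{i^\star}\ge\tilde v_{i^\star}\ge\tilde v^{(2)}$. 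For IC I would split into $v_i\le\beta_i$ and $v_i>\beta_i$: in the first case the mechanism is exactly VCG on the true value $\tilde v_i=v_i$, and a deviation $v'_i>\beta_i$ merely clips to $\beta_i\ge v_i$, i.e. over-reporting in VCG, which cannot help; in the second case any report $\ge\beta_i$ is observationally identical to truth, while a report $<\beta_i$ can only weakly shrink $i$'s winning region, and since a win always yields the same $s(v_i-\tilde v^{(2)})\ge 0$ independent of $i$'s report, underreporting cannot help either.

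For Pareto efficiency, suppose for contradiction that $(x',\pi')$ weakly dominates the VCG outcome with strict welfare gain. For each $j\ne i^\star$, combining IR ($\pi'_j\le v_j x'_j$) with admissibility ($\pi'_j\le\beta_j x'_j$) gives $\pi'_j\le\tilde v_j x'_j$, and the no-loss condition for $i^\star$ yields $\pi'_{i^\star}\le v_{i^\star}x'_{i^\star}-s(v_{i^\star}-\tilde v^{(2)})$. Summing and using $\tilde v_j\le\tilde v_{i^\star}$ together with $\sum_{j\ne i^\star}x'_j\le s-x'_{i^\star}$ telescopes to
\[
\sum_i\pi'_i\;\le\;s\tilde v^{(2)}\;+\;(v_{i^\star}-\tilde v_{i^\star})(x'_{i^\star}-s)\;\le\;s\tilde v^{(2)}.
\]
The revenue constraint $\sum_i\pi'_i\ge s\tilde v^{(2)}$ forces equality throughout, which requires both (i) $x'_{i^\star}=s$ or $v_{i^\star}=\tilde v_{i^\star}$, and (ii) $\tilde v_j=\tilde v_{i^\star}$ for every $j\ne i^\star$ with $x'_j>0$.

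The first alternative in (i) combined with feasibility forces $x'_j=0$ for all $j\ne i^\star$, giving welfare $sv_{i^\star}$, not a strict improvement. In the second alternative, any $j\ne i^\star$ with $x'_j>0$ satisfies $\min\{v_j,\beta_j\}=v_{i^\star}$: either $v_j\le v_{i^\star}$ (its welfare contribution is $\le v_{i^\star}x'_j$), or $v_j>v_{i^\star}$ and $\beta_j=v_{i^\star}$, in which case the $\tilde v$-tie between $j$ and $i^\star$ is incompatible with the higher-$v$ tie-break. So all contributing $j$'s have $v_j\le v_{i^\star}$, and $\sum_iv_ix'_i\le v_{i^\star}\sum_ix'_i\le sv_{i^\star}$, again contradicting strict welfare improvement. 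The main obstacle I expect is precisely this tie-break case: without breaking ties in $\tilde v$ in favor of higher $v$, a trade with a player $j$ having $\beta_j=v_{i^\star}<v_j$ would constitute a legitimate Pareto improvement, so this tie-breaking convention is essential to the lemma.
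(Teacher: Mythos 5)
Your Pareto‑efficiency argument follows the same strategy as the paper's: bound each $\pi'_i$ by $\tilde v_j x'_j$ for the losers and by the no‑loss condition for the winner, sum, compare to $s\tilde v^{(2)}$, and force equalities. In fact you are more careful than the paper in two respects. First, the paper writes $\pi'_i\le\beta_i x'_i\le\tilde v_2 x'_i$ for $i\neq 1$, but $\beta_i\le\tilde v_2$ is false in general (e.g.\ $v_i<\tilde v_2<\beta_i$); the correct step is yours, combining IR and admissibility to get $\pi'_i\le\min\{v_i,\beta_i\}x'_i=\tilde v_i x'_i$. Second, the paper's conclusion ``so $x'_1=s$'' silently assumes the coefficient $\tilde v_2-v_1$ is strictly negative, i.e.\ no tie at the top; you correctly spot this gap and treat the degenerate case.

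However, the remedy you propose, breaking ties in $\tilde v$ in favor of the \emph{reported} $v_i$, is itself not incentive compatible, and this undermines your IC argument in exactly the regime where you invoke the tie‑break. You write that for $v_i>\beta_i$ ``any report $\ge\beta_i$ is observationally identical to truth,'' but with a $v$‑tie‑break that is false: two such reports yield the same clipped bid $\beta_i$ yet different tie‑break outcomes. Concretely, take $\beta_i=\beta_j=1$, $v_i=5$, $v_j=10$: both clip to $1$, the truthful tie‑break gives the units to $j$, and $i$ gets utility $0$; but by reporting $v'_i=11$, $i$ wins the tie, pays $s$ per unit total $s$, and gets utility $4s>0$. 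So the tie‑break you need for PO breaks the IC you need elsewhere.

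The fix is a weaker tie‑break that still closes the PO gap but is report‑monotone in the right way: among tied bidders, prefer any bidder with $v'_i>\beta_i$ (an observable event), breaking remaining ties by a report‑independent rule such as index. This is enough for your PO step: in your ``second alternative'' one has $v_{i^\star}=\tilde v_{i^\star}\le\beta_{i^\star}$, while the problematic $j$ has $v_j>\beta_j=v_{i^\star}$, so $j$ is strictly preferred by this rule and $i^\star$ could not have won the tie. And it is IC: a player with $v_i\le\beta_i$ who inflates to $v'_i>\beta_i$ only enlarges a winning region in which the per‑unit price is $\ge v_i$, giving nonpositive gain; a player with $v_i>\beta_i$ can neither change her clipped bid nor her tie‑break class by any report $\ge\beta_i$, and lowering below $\beta_i$ only shrinks a region where winning is strictly profitable. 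Alternatively one can side‑step the tie discussion entirely by assuming, as the paper effectively does in the clinching analysis with the $\epsilon$‑grid, that the $\tilde v_i$ are distinct; but since you raised the tie case explicitly, the tie‑break should be chosen so that IC survives it.
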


The proof is trivial and is included in the appendix for completeness. The
strategy above, however, doesn't generalize beyond multi-unit auctions.
Running VCG on $\tilde{v}_i$ for more general
polymatroidal environments is still incentive compatible and
individually rational, but fails to be Pareto-efficient in general. In fact,
consider the following very simple example:

\begin{example}
Consider the environment $P = \{x \in \R^2_+; x_1 + x_2 \leq 3, x_1 \leq 2, x_2
\leq 2\}$. For readers familiar with sponsored search, this corresponds to the
sponsored search environment with click-through-rates $(2,1)$. Now, consider
two agents with average budget constraints and $v_1 = 10, \beta_1 = 1$ and $v_2
= 2, \beta_2 = 2$. Running VCG on $\tilde{v}$ we get allocation $x = (1,2)$ and
$\pi = (0,1)$. This outcome is clearly not Pareto-efficient, since for any $0
< z \leq \frac{1}{2}$, the outcome $x = (1+z,2-z)$, $\pi = (2z, 1-2z)$ is a
Pareto-improvement $(x,\pi)$.
\end{example}

\section{Warm-up: the multi-units environment}

As a warm-up we consider the problem of designing an incentive
compatible, individually rational and Pareto optimal auction for the multi-units
setting, i.e., $P = \{x \in \R^n_+; \sum_i x_i\leq 1\}$ when agents have
constrained quasi-linear utilities. This will allow us to
highlight the main features of our design in a combinatorially simple setting.
The auction we will describe is a discrete step ascending clock price auction,
based on the clinching framework. The
auction takes as input the value $v_i$ of each agent and their admissible sets
$\A_i$ (defined in terms of an ability-to-pay function $\alpha_i$) and produces
a final allocation $x_i$ and payments $\pi_i$ for each agent. We will denote
$\beta_i = \lim_{x \downarrow 0} \alpha_i(x)/x$.

The auction is initialized with zero allocation and zero payments for all the
agents $x_i = \pi_i = 0$. The price clock is represented by a vector $p \in
\R^n_+$ where $p_i$ represents the price faced by agent $i$. Prices are 
initialized to zero.

In round-robin fashion an agent $\hat{i}$ is chosen and his price
$p_{\hat{i}}$ is incremented by a fixed amount $\epsilon > 0$. At this point, we
compute the
\emph{demand} of each agent, which is the maximum amount of the good that this
agent would want to acquire at price $p_i$, i.e., $d_i = \text{argmax}_z u_i(x_i
+ z, \pi_i + p_i z)$. It can be computed as follows: $d_i = \max \{z_i; (x_i
+ z_i, \pi_i + p_i z_i) \in \A_i\}$ if $p_i < v_i$ and $d_i = 0$ otherwise.
Based on the demands of each agent, we calculate how much each agent
\emph{clinches} in each round, i.e., how much it is safe to give to each agent
while not making any allocation infeasible for other agents.

The clinched amount is calculated as follows: let $s = 1-\sum_i x_i$ be the
remnant supply. The total demand of all agents except $i$ is given by $\sum_{j
\neq i} d_j$. We define the difference $\delta_i = [s - \sum_{j \neq i} d_j]^+$
as the clinched amount, i.e., the portion of the remnant supply that is
under-demanded by agents $[n] \setminus \{i\}$. The auction proceeds by updating
the allocation and payments by giving to each agent his clinched amount at the
current price. We summarize the auction in Algorithm
\ref{multi-units-clinching-auction}.

\begin{algorithm}[h]
 \caption{Multi-Units Clinching Auction}
  \textbf{Input:} $P, v_i, \A_i$\\
  $p_i = 0$, $x_i = 0$, for all $i$ and $\hi=1$\\
  \textbf{do} \\
   \algspace $d_i = \max \{z_i; (x_i + z_i, \pi_i + p_i z_i) \in \A_i \}$ if $p_i < v_i$ and $d_i = 0$ otherwise.\\
  \algspace $\delta_i = [1-\sum_j x_j - \sum_{j \neq i} d_j]^+$, \\
   \algspace $x_i = x_i + \delta_i$, \quad $\pi_i = \pi_i + p_i
\cdot \delta_i$,\\
  \algspace $d_i = \max \{z_i; (x_i + z_i, \pi_i + p_i z_i) \in \A_i \}$ if $p_i < v_i$ and $d_i = 0$ otherwise.
\comment{\checkpoint{\cp}} \\
  \algspace $p_{\hi} = p_{\hi} + \epsilon$ ,\quad $\hi = \hi
+ 1 \mod n$ \\
  \textbf{while} $d \neq 0$
\label{multi-units-clinching-auction}
\end{algorithm}

The outcome of the auction corresponds to the final allocation and payments of
the ascending procedure. It follows from standard arguments about the clinching
framework \cite{Ausubel_multi,dobzinski12} that this auction is incentive compatible
and individually rational. It is individually rational since it never allocates
any amount at a rate larger then the value $v_i$, so in the end, $\pi_i \leq v_i
x_i$. It is incentive compatible, since the value only determines when
an agent drops his demand to zero. By misreporting his value, 
an agent can either drop out of the ascending procedure earlier (potentially 
missing items he could acquire at a price smaller then his valuation) 
or drop later (potentially acquiring items for price larger then his value). 
Therefore it is a dominant strategy for each agent to truthfully report his value.

We are left to argue that the auction is Pareto optimal. In order to do that,
first, we define the notion of \emph{dropping price} and then we give a
structural characterization of the outcome in terms of dropping prices. The
main result in this paper (Lemma \ref{lemma:structure_tight_sets}) is a
generalization of this characterization for generic polymatroidal environments.

In the rest of the section, we will use the following assumption that holds
wlog in the limit when $\epsilon$ goes to zero:

\begin{assumption}
All values of $v_i$ and $\beta_i = \lim_{x \downarrow 0} \alpha_i(x)/x$
are 
multiples of $\epsilon$, which is the price clock increment in the 
ascending auction.
\end{assumption}

\subsection{Dropping Prices}\label{subsec:dropping_prices}

\begin{defn}[Dropping price]
Given an execution of Algorithm \ref{multi-units-clinching-auction}, we define
the \emph{dropping price} for agent $i$ (we call $\phi_i$) as the first price
for which he had zero demand.
\end{defn}

The demand of an agent can drop from
positive to zero in the execution of Algorithm
\ref{multi-units-clinching-auction} for three different reasons:

\begin{enumerate}
 \item the first case is where the buyer clinched his entire demand, i.e.,
$\delta_i = d_i$. By the definition
of demand, the player ends up with an allocation such that $\pi_i =
\alpha(x_i)$, since just before clinching, his demand was: $d_i = \max \{z_i;
(\tilde{x}_i + z_i, \tilde{\pi}_i + p \cdot z_i) \in \A_i\}$, where
$(\tilde{x}_i, \tilde{\pi}_i)$ is his allocation and payment just before
clinching for the last time.

After this happens, for any
price $p' \geq p$, the demand is zero, since a positive demand would imply that
there is some $\kappa > 0$ such that: $(\tilde{x}_i + d_i + \kappa,
\tilde{\pi}_i + p \cdot d_i + p' \cdot \kappa) \in \A_i$. This would contradict
the maximality of $z_i$, since by the concavity of $\alpha_i$, we
would have: $(\tilde{x}_i + (d_i + \kappa),
\tilde{\pi}_i + p \cdot (d_i+\kappa)) \in \A_i$. This is depicted in the
first part of Figure \ref{fig:dropping}.
 \item the player didn't clinch his entire demand, but the price reached his
value, i.e., $p = v_i$.
 \item the player didn't clinch his entire demand, but $\pi_i = \beta_i
\cdot x_i$ and $p > \beta_i$. This is depicted in the
second part of Figure \ref{fig:dropping}.
\end{enumerate}

We observe that:

\begin{lemma}\label{lemma:simple_bound}
 The dropping price $\phi_i$ is at most $v_i$. Also, if the final
outcome of agent $i$ is $x_i = \pi_i = 0$, then $\phi_i =
\min\{\beta_i+\epsilon, v_i\}$.
\end{lemma}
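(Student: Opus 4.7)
The first claim is immediate from the demand rule: the algorithm explicitly sets $d_i = 0$ whenever $p_i \geq v_i$, so the first round of zero demand occurs at $p_i \leq v_i$, giving $\phi_i \leq v_i$.

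For the second claim, I would first observe that $x_i$ is monotone nondecreasing across rounds, so the final $x_i = 0$ forces $\delta_i = 0$ in every round, and hence the state $(x_i,\pi_i) = (0,0)$ persists for agent $i$ throughout the execution. Demand then simplifies to $d_i = \max\{z \geq 0 : \alpha_i(z) \geq p_i z\}$ whenever $p_i < v_i$, and the plan is to show that this set contains positive elements exactly when $p_i \leq \beta_i$. The key ingredient is the standard fact that, for a concave $\alpha_i$ with $\alpha_i(0) = 0$, the chord slope $\alpha_i(z)/z$ is non-increasing in $z > 0$ with limit $\beta_i$ as $z \downarrow 0$. From this: for $p_i < \beta_i$, picking a small $z > 0$ with $\alpha_i(z)/z > p_i$ gives $d_i > 0$; for $p_i > \beta_i$, we have $\alpha_i(z) \leq \beta_i z < p_i z$ for all $z > 0$, giving $d_i = 0$. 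Under the assumption that $v_i$ and $\beta_i$ are multiples of $\epsilon$ and the clock advances in $\epsilon$-steps, the clock visits $\beta_i$ exactly before jumping to $\beta_i + \epsilon$; combining with the demand cutoff at $p_i = v_i$, the first zero-demand round occurs at $p_i = \beta_i + \epsilon$ or $p_i = v_i$, whichever comes first, yielding $\phi_i = \min\{\beta_i + \epsilon, v_i\}$.

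The delicate point, which I expect to require the most care, is the boundary case $p_i = \beta_i$: whether $d_i > 0$ there depends on whether $\alpha_i$ is linear with slope $\beta_i$ on a positive initial segment (as in the hard-budget, average-budget, and capped-linear examples considered in the paper) or is strictly sublinear at the origin. I would adopt the convention that $\beta_i$ denotes an initial chord slope attained on such a segment so that $d_i > 0$ at $p_i = \beta_i$; this is consistent with the concrete admissible sets throughout the paper and makes the stated equality tight. Without this convention, the argument above still gives the weaker bound $\phi_i \leq \min\{\beta_i + \epsilon, v_i\}$, which is what downstream structural results actually require.
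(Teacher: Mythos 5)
Your proof is correct and follows the same line of reasoning as the paper's (very terse) argument: the first claim is immediate from the demand cutoff at $p_i = v_i$, and the second uses that for a persistently zero-allocation agent the demand is governed entirely by the chord slope of $\alpha_i$ at the origin. You have also spotted a genuine subtlety that the paper glosses over. If $\alpha_i$ is strictly concave at the origin, so that $\alpha_i(z)/z < \beta_i$ for every $z>0$, then already at $p_i=\beta_i$ the set $\{z : \beta_i z \le \alpha_i(z)\}$ is $\{0\}$, hence $d_i=0$ there and $\phi_i = \min\{\beta_i, v_i\}$ rather than $\min\{\beta_i+\epsilon, v_i\}$. Your linear-initial-segment convention makes the stated equality hold; alternatively, observe that what the multi-units Pareto-optimality proof actually invokes is the \emph{lower} bound $\phi_i \ge \min\{\beta_i, v_i\}$ (in the chain $\pi'_i \le \min\{\beta_i, v_i\} x'_i \le \phi_i x'_i$), and this holds regardless of the boundary convention since demand is strictly positive for every $p_i < \min\{\beta_i, v_i\}$.
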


\begin{proof}
The fact that $\phi_i \leq v_i$ comes from the fact that $v_i$, $p_i$
and $\beta_i$ are multiples of $\epsilon$ and that for $p_i \geq v_i$, the
demand of $i$ is zero. Also, if $x_i = \pi_i = 0$, there are two reasons for
the demand to become zero: either $p_i$ becomes larger then $\beta_i$ or $p_i$
reaches $v_i$.
\end{proof}

\begin{figure}
\centering
\includegraphics{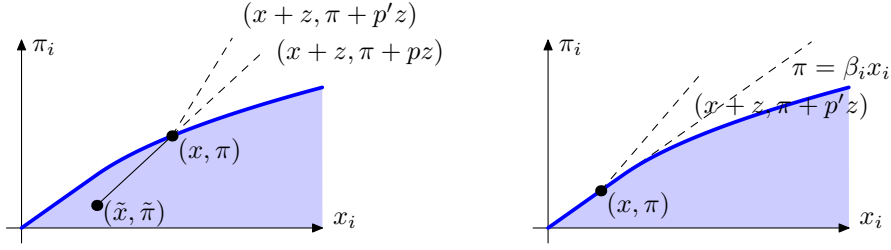}
\caption{Depiction of two reasons a player might drop his demand to zero. Left:
by clinching his entire demand. Right: by having $\pi_i = \beta_i x_i$ and $p >
\beta_i$. }
\label{fig:dropping}
\end{figure}

\subsection{Multi-units version of the Structure of Tight Sets Lemma}

Now we relate the dropping prices to the structure of the final outcome: we
show that if one sorts agents by dropping price, the agents with high dropping
price (which we will call $H$) will be allocated and charged the maximum
admissible amount fo the quantity they get. Agents with low dropping price
(which we will call $L$) will be unallocated and will be charged zero.

\begin{lemma}\label{lemma:primitive_sets_lemma}
Let $(x,\pi)$ be the outcome of the clinching auction for the multi-units
setting, then one can partition the set of agents $[n] = L \cup \{k\} \cup H$
such that:
\begin{itemize}
 \item for $i \in L$, $x_i = \pi_i = 0$ and $\phi_i \leq \phi_k$.
 \item for $i \in H$, $v_i > \phi_i \in \{ \phi_k, \phi_k - \epsilon\}$ and
$\pi_i = \alpha_i(x_i)$.
  \item $k$ drops without clinching his demand, therefore either (i) $\phi_{k} =
v_k$ or (ii) $\pi_k = \beta_k x_k$ and $\phi_k = \beta_k + \epsilon$.
\end{itemize}
Moreover, the clinching auction allocates all the goods, i.e., $\sum_i x_i = 1$.
\end{lemma}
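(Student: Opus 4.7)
The plan is to identify $k$ as a pivot agent whose drop triggers termination of the auction, and use its dropping price $\phi_k$ as a reference for partitioning the remaining agents. I would proceed in four steps.

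First, I would establish $\sum_i x_i = 1$ by contradiction: if supply $s > 0$ remained at termination, then in the last iteration with any positive demand, the clinching formula $\delta_i = s - \sum_{j \neq i} d_j > 0$ (since all other agents have zero demand) would force agent $i$ to continue clinching. This continues until either the supply is exhausted or $i$'s payment reaches the boundary $\alpha_i(x_i)$; either way one contradicts the assumption that supply remains with all demands at zero.

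Second, let $\phi^* := \max_i \phi_i$ and pick $k$ to be the last agent (in round-robin order) whose demand dropped to zero, so that $\phi_k = \phi^*$. I would argue $k$ must have dropped by case 2 or 3 of Section~\ref{subsec:dropping_prices}. The key observation is that once the supply is exhausted, the clinching formula yields $\delta_i = 0$ for every $i$, so the only way a remaining positive demand can drop is via a price increment pushing $p_k$ past $v_k$ (case 2, giving $\phi_k = v_k$) or past $\beta_k$ while $\pi_k = \beta_k x_k$ (case 3, giving $\phi_k = \beta_k + \epsilon$). Third, for each agent with $x_i = \pi_i = 0$, Lemma~\ref{lemma:simple_bound} gives $\phi_i = \min\{\beta_i + \epsilon, v_i\} \leq \phi^* = \phi_k$, placing $i$ in $L$. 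For the remaining agents (those with $x_i > 0$ and $i \neq k$), I would verify the three $H$ conditions. The equality $\pi_i = \alpha_i(x_i)$ follows from $i$'s drop being by case 1 (clinching its entire demand), since cases 2 or 3 combined with $i \neq k$ and $x_i > 0$ are ruled out by the supply-pressure argument below (or, in the degenerate case 3 on the linear portion of $\alpha_i$, we have $\pi_i = \beta_i x_i = \alpha_i(x_i)$ anyway). The strict inequality $v_i > \phi_i$ follows because the final clinch of $i$ must occur at a price strictly below $v_i$. Finally, $\phi_i \in \{\phi_k, \phi_k - \epsilon\}$ comes from the observation that once the auction enters the saturation regime in which clinching is actively happening, drops occur in rapid succession within a single round-robin cycle bounded by $\epsilon$.

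The main obstacle, in my view, is the last property: showing that positive-allocation agents drop within $\epsilon$ of $\phi_k$. This is the supply-pressure argument, and the intuition is that if some $i$ with $x_i > 0$ dropped at $\phi_i \leq \phi_k - 2\epsilon$, then the supply $i$ was absorbing would be redistributed via the clinching formula to the remaining agents, forcing additional case-1 drops in the intervening rounds. One must then track carefully that these additional drops cannot occur strictly before $\phi_k$ without contradicting either the maximality of $\phi_k$ or the definition of $k$ as the final dropper. This is precisely the multi-units specialization of the general Structure of Tight Sets Lemma (Lemma~\ref{lemma:structure_tight_sets}) that constitutes the paper's main structural contribution.
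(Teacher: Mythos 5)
Your high-level plan matches the paper's: the paper gives no standalone proof of this lemma but derives it as the multi-units specialization of the Structure of Tight Sets Lemma (Lemma~\ref{lemma:structure_tight_sets} together with Lemma~\ref{lemma:dropping_prices}), which is also where you defer the ``main obstacle'' step. However, two of the steps you call bookkeeping do not go through as written.

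First, your argument for $\sum_i x_i = 1$ has a hole. You claim that the last agent $i$ with positive demand keeps clinching ``until either the supply is exhausted or $i$'s payment reaches the boundary $\alpha_i(x_i)$; either way one contradicts the assumption that supply remains.'' But the second alternative is \emph{not} a contradiction: if $\pi_i$ hits $\alpha_i(x_i)$ with $s>0$ still remaining, then $d_i$ simply becomes zero and the auction terminates with unsold goods. What actually forbids this is the invariant $f_{x+d}([n]) = f([n])$ (Lemma~\ref{lemma:all_goods_sold}), which for multi-units reads $\sum_j (x_j + d_j) \ge 1$ at every point; at termination all $d_j=0$, so $\sum_j x_j = 1$. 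That same invariant is also what makes $\delta_i = [\,1-\sum_j x_j - \sum_{j\ne i} d_j\,]^+ \le d_i$, which your contradiction argument tacitly assumes. Second, defining $k$ as the agent with maximal dropping price (or the one who drops last chronologically) does not correctly single out the pivot. An $H$-agent whose price was incremented earlier in the same round-robin cycle drops by clinching at price exactly $\phi_k$, and it does so chronologically \emph{after} the pivot's demand hits zero --- the pivot drops on the price increment, and only then does the released supply cause the others to clinch their full demand. So ``last to drop'' and ``max $\phi_i$'' are both ambiguous between $k$ and those $H$-agents, and if you pick the wrong one the case-2/3 analysis fails. The right choice, the one Corollary~\ref{cor:structure} makes, is to take $k$ to be the last agent who drops \emph{without clinching his entire demand}; with that definition the claim that $k$ falls into case 2 or 3 is immediate from the case analysis in Section~\ref{subsec:dropping_prices}, and you do not need the circular ``supply is already exhausted when $k$ drops'' reasoning (which is false: supply is exhausted only after the resulting cascade of clinches).
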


{
The lemma above is a special case of Lemma \ref{lemma:structure_tight_sets}, so
we defer a formal proof until that point.  The proof of the special case is
implicit in Bhattacharya et al \cite{Bhattacharya10} and Goel et al
\cite{goel13}. The main idea behind it is to
show that all agents that acquire a positive amount drop their demand
at essentially the same price: once a player that already acquired a positive
amount drops his demand to zero, all other players clinch their entire demand.
}

\comment{\color{red}
The lemma above is a special case of Lemma \ref{lemma:structure_tight_sets}, so
we defer a formal proof until that point and offer here, instead, the main
steps in the argument and the intuition behind it:
\begin{itemize}
 \item we consider the demand of agent $\hat{i}$ after clinching
happens and demands are updated (point \cp in Algorithm
\ref{multi-units-clinching-auction}). If the demand of ${\hat{i}}$ was positive
after his price was updated, and is still positive afterwards, then it is
possible to show that no other agent can clinch his entire demand, i.e, for any
other $i \neq \hat{i}$, $\delta_i < d_i$.
 \item therefore, for an agent $i$ to clinch his entire demand, agent $\hat{i}$
must drop his demand to zero after his price $p_{\hat{i}}$ is updated.
 \item we define the \emph{clinching set} as the set of agents $i$ such that in
point \cp, the remnant supply is equal to the sum of the
demands of the other agents, i.e., $1-\sum_j x_j = \sum_{d \neq i} d_j$. We can
show that each agent that was already allocated a positive amount of goods is
in the clinching set.
 \item we can define $k$ as the first agent $\hat{i}$ that drops his demand to
zero and is in the clinching set. We can define $L$ as the set of agents that
dropped their demand to zero before $k$. By the definition of $k$ and the
fact that agents with positive allocation are in the clinching set, the
agents in $L$ must have $x_i = \pi_i = 0$. Let $H$ the remaining agents. The
last piece is to argue that when an agent in the clinching set drops, all other
agents, clinch their entire demand and drop as well facing prices equal to the
dropping price of $k$ or this price minus $\epsilon$.
\end{itemize}
}

\subsection{Pareto optimality}

\begin{theorem}
 The clinching auction with constrained quasi-linear utilities for the
multi-unit setting is Pareto optimal.
\end{theorem}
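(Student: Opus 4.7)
The plan is to argue by contradiction. I assume that $(x',\pi')$ is a Pareto improvement over the auction outcome $(x,\pi)$ and use the partition $[n] = L \cup \{k\} \cup H$ from Lemma~\ref{lemma:primitive_sets_lemma} to reach a contradiction. The central ingredient will be a \emph{uniform payment bound} in which $\phi_k$ plays the role of a market-clearing price:
\[
\pi'_i \;\le\; \pi_i + \phi_k\,(x'_i - x_i) \qquad \text{for every } i.
\]
Summing over $i$ and using $\sum_i x'_i \le 1 = \sum_i x_i$ (feasibility of $x'$ and Lemma~\ref{lemma:primitive_sets_lemma}) will yield $\sum_i \pi'_i \le \sum_i \pi_i$; combined with the revenue condition $\sum_i \pi'_i \ge \sum_i \pi_i$, this forces equality throughout, after which a short welfare accounting contradicts $\sum_i v_i x'_i > \sum_i v_i x_i$.

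I would prove the uniform bound by a case split on the partition, in each case deriving the inequality either from admissibility (via concavity of $\alpha_i$) or from individual rationality, whichever side is active. For $i \in L$ we have $x_i = \pi_i = 0$, so admissibility gives $\pi'_i \le \alpha_i(x'_i) \le \beta_i x'_i$ and IR gives $\pi'_i \le v_i x'_i$; the identity $\phi_i = \min\{\beta_i + \epsilon, v_i\}$ from Lemma~\ref{lemma:simple_bound} together with $\phi_i \le \phi_k$ implies $\min\{\beta_i, v_i\} \le \phi_k$, which yields the bound. For $i \in H$ with $x'_i \ge x_i$, I use admissibility together with $\alpha'_i(x_i^+) \le \phi_i \le \phi_k$, which follows from the clinching-termination analysis of Section~\ref{subsec:dropping_prices}~(case~1): after $i$ clinched his entire demand at price $\phi_i$, the recomputed demand was zero, so $\alpha_i(x_i + z) - \alpha_i(x_i) < \phi_i\,z$ for all $z > 0$. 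For $i \in H$ with $x'_i < x_i$, I use IR together with $v_i > \phi_i \ge \phi_k - \epsilon$, which by the multiples-of-$\epsilon$ assumption rounds up to $v_i \ge \phi_k$, so multiplying by the negative quantity $x'_i - x_i$ flips the inequality into the claimed bound. Agent $k$ is handled analogously in the two cases of Lemma~\ref{lemma:primitive_sets_lemma}: case~(i) uses IR with $v_k = \phi_k$, and case~(ii) mimics the $H$ analysis, using $\alpha'_k(x_k^+) \le \phi_k$ for $x'_k \ge x_k$ and $v_k \ge \beta_k + \epsilon = \phi_k$ for $x'_k < x_k$.

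Once the summed bound is squeezed into an equality by the revenue condition and the tight identity $\pi'_i - \pi_i = \phi_k(x'_i - x_i)$ holds coordinate-wise, I would combine it with IR $v_i(x'_i - x_i) \ge \pi'_i - \pi_i$; tightness collapses IR to $v_i = \phi_k$ on every coordinate that changes, and summing yields $\sum_i v_i(x'_i - x_i) = \phi_k \sum_i(x'_i - x_i) = 0$, contradicting the strict welfare gain. The step I expect to be the main obstacle is ruling out the possibility $x'_i > x_i$ for $i \in H$ (and symmetrically for $k$ in case~(ii)): tightness of the admissibility bound there would demand $\alpha_i$ to be linear with slope exactly $\phi_k$ on $[x_i, x'_i]$, whereas the strict inequality $\alpha_i(x_i + z) - \alpha_i(x_i) < \phi_i\,z$ for every $z > 0$ inherited from clinching termination precisely forbids such linearity and forces $x'_i \le x_i$. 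Carefully extracting this strictness from the recomputed-demand condition is the delicate point that makes the welfare bookkeeping close.
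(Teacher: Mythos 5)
Your proposal is correct and follows essentially the same route as the paper's own proof: the same partition $L \cup \{k\} \cup H$ from Lemma~\ref{lemma:primitive_sets_lemma}, the same coordinate-wise bound $\pi_i - \pi'_i \geq \phi_k(x_i - x'_i)$ established by the same case split (admissibility plus concavity on the $x'_i > x_i$ side, individual rationality plus $v_i \geq \phi_k$ on the $x'_i \leq x_i$ side, with $k$ split into the two sub-cases of the lemma), and the same summation against $\sum_i x_i = 1 \geq \sum_i x'_i$ to squeeze revenue and force tightness. The ``delicate point'' you flag at the end — that strictness of $\alpha_i(x_i+z) - \alpha_i(x_i) < \phi_i z$ from the clinching-termination condition rules out $x'_i > x_i$ at tightness — is exactly how the paper closes: it records that the key inequality is strict whenever $x_i < x'_i$, so tightness forces $x'_i \leq x_i$ coordinate-wise, whence $\sum_i v_i x_i \geq \sum_i v_i x'_i$ directly.
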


\begin{proof}
Let $(x,\pi)$ be the outcome of Algorithm \ref{multi-units-clinching-auction}
for valuations $v_i$ and admissible sets $\A_i$ defined by $\alpha_i$. Assume
that an alternative outcome $(x',\pi')$ is a Pareto improvement, i.e.,
$u_i(x'_i, \pi'_i)  \geq u_i(x_i, \pi_i)$, $\sum_i \pi'_i \geq \sum_i \pi_i$
and $\sum_i v_i x'_i > \sum_i v_i x_i$.

Let $L$, $H$ and $k$ be as in Lemma \ref{lemma:primitive_sets_lemma}. First we
show that $\pi_i - \pi'_i \geq \phi_k (x_i - x'_i)$ for all $i$ and for the
case where $x_i < x'_i$, the inequality holds strictly, i.e., $\pi_i - \pi'_i
> \phi_k (x_i - x'_i)$. Consider the following cases:
\begin{itemize}
 \item $i \in L$, then  $\pi'_i \leq
\min\{\beta_i, v_i\} \cdot x'_i \leq \phi_i x'_i \leq \phi_k x'_i$, where the
first inequality follows from the fact that $u_i(x'_i, \pi'_i) \geq 0$, the
second follows from Lemma \ref{lemma:simple_bound} and the third from Lemma
\ref{lemma:primitive_sets_lemma}. Noting that for $i \in L$, $x_i = \pi_i = 0$,
we get: $\pi_i - \pi'_i \geq \phi_k (x_i - x'_i)$
 \item $i \in H$, $x_i \geq x'_i$: since $v_i > \phi_i \geq \phi_k - \epsilon$
and all values and prices are multiples of $\epsilon$, then: $v_i \geq \phi_k$.
Since $u_i(x'_i, \pi'_i) \geq u_i(x_i, \pi_i)$, then $\pi_i - \pi'_i \geq
v_i \cdot (x_i - x'_i) \geq \phi_k \cdot (x_i - x'_i)$.
 \item $i \in H$, $x_i < x'_i$: player $i$ clinched his entire demand at
price $\phi_i$. By the definition of demand for any $\kappa > 0$, $(x_i +
\kappa, \pi_i + \phi_i \cdot \kappa) \notin \A_i$. In particular, for $\kappa =
x'_i - x_i$. Since $(x'_i, \pi'_i) \in \A_i$, it must be the case that: $\pi'_i
< \pi_i + \phi_i \cdot (x'_i - x_i)$. Now using the fact that $\phi_i \leq
\phi_k$ and re-arranging the inequality we get $\pi_i - \pi'_i >
\phi_{k}(x_i - x'_i)$.
\item $i = k$: then either (i) $\phi_k = v_k$, in which case we use the fact
that $(x',\pi')$ is a Pareto improvement to get that: $\pi_k - \pi'_k \geq v_k
(x_k - x'_k) = \phi_k (x_k - x'_k)$; or (ii) $\pi_k = \beta_k x_k$ and $\phi_k =
\beta_k + \epsilon$. If $x_k \geq x'_k$, we use the same argument as in the
second item. If $x_k < x'_k$, we use that $(x'_k, \pi'_k)$ is admissible
therefore $\pi'_k \leq \beta_k x'_k$ so: $\pi_k - \pi'_k \geq \beta_k (x_k -
x'_k) > \phi_k (x_k - x'_k)$.
\end{itemize}

Summing for all $i$ we obtain $\sum_i \pi_i - \sum_i \pi'_i \geq \phi_k \cdot
(\sum_i x_i - \sum_i x'_i) = \phi_k(1-\sum_i x'_i) \geq 0$. Since $\sum_i \pi_i
\leq \sum_i \pi'_i$, the revenue in both cases must be equal, therefore all
inequalities must hold with equalities. In particular, it must be that  $x_i
\geq x'_i$ for all $i$ since for $x_i < x'_i$, the inequality $\pi_i - \pi'_i
\geq \phi_k (x_i - x'_i)$ holds strictly. This in particular implies that
$\sum_i v_i x_i \geq\sum_i v_i x'_i$, contradicting the fact that $(x', \pi')$
is a Pareto improvement.
\end{proof}

\section{Polymatroidal environments}

Now we extend the result in the previous section to
general polymatroidal environments. We do so by changing the
way demands are calculated in the polyhedral clinching auction of Goel et al
\cite{goel12}. As usual, incentive compatibility and individual rationality
follow as usual from properties of the clinching framework.
The main challenge in extending the result in the previous section to
general polymatroidal environments is extending Lemma
\ref{lemma:primitive_sets_lemma} to combinatorial settings.

We begin by describing the polyhedral clinching auction for the case of
constrained quasi-linear utilities. The auction
takes as input the feasible set $P \subseteq \R^n_+$, agent values
$v_i$ and
valid admissible sets $\A_i$ and computes an allocation $x \in P$ and a
payment vector $\pi$ such that $(x_i, \pi_i) \in \A_i$ for all $i$.

The auction, described in Algorithm \ref{polyhedral-clinching-auction}, is a
version of Algorithm \ref{multi-units-clinching-auction} that redefines the
clinching step to take into account the environment $P$.

\begin{algorithm}[h]
 \caption{Polyhedral Clinching Auction}
  \textbf{Input:} $P, v_i, \A_i$\\
  $p_i = 0$, $x_i = 0$, for all $i$ and $\hi=1$ \\
  \textbf{do} \\
  \algspace $d_i = \max \{z_i; (x_i + z_i, \pi_i + p_i z_i) \in \A_i \}$ if $p_i
< v_i$ and $d_i = 0$ otherwise.\\
  \algspace $\delta = \text{\textbf{clinch}}(P,x,d)$, \\
   \algspace $x_i = x_i + \delta_i$, \quad $\pi_i = \pi_i + p_i
\cdot \delta_i$,\\
  \algspace $d_i = \max \{z_i; (x_i + z_i, \pi_i + p_i z_i) \in \A_i \}$ if $p_i
< v_i$ and $d_i = 0$ otherwise. \checkpoint{\cp} \\
  \algspace $p_{\hi} = p_{\hi} + \epsilon$ ,\quad $\hi = \hi
+ 1 \mod n$ \\
  \textbf{while} $d \neq 0$
\label{polyhedral-clinching-auction}
\end{algorithm}

\begin{defn}[Clinching]
 Given an allocation $x$ and demands $d$, the \emph{remnant supply} polytope is
defined as $P_{x,d} = \{y \in \R^n_+; x + y \in P; y \leq d \}$. Given an
amount $z_i$ for player $i$ we define the polytope on $P^i_{x,d}(z_i)$ of the
possible allocations for $[n]\setminus i$ if we allocate extra $z_i$ units to
player $i$. Formally: $P^i_{x,d}(z_i) = \{z_{-i} \in \R^{[n] \setminus i}_+;
(z_i, z_{-i}) \in P_{x,d}\}$. Now, the clinching amount $\delta_i$ is defined
as the maximum allocation to player $i$ that doesn't make any allocation for
other players infeasible: $\delta_i = \max \{z_i; P^i_{x,d}(z_i) =
P^i_{x,d}(0)\}$.
\end{defn}

It follows from standard arguments on clinching auctions that the auction is
incentive-compatible, individually-rational and produces admissable outcomes.
See for example Lemmas 3.3, 3.4 and 3.5 in \cite{goel12}. 

\begin{theorem}[\cite{goel12}]
 The clinching procedure is well defined (i.e. it stops after finite time and
$x \in P$). The auction produced is truthful, individually rational
and produces acceptable outcomes, i.e., $(x_i, \pi_i) \in \A_i$.
\end{theorem}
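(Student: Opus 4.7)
The plan is to verify each of the four claims in turn, adapting the arguments of Lemmas 3.3--3.5 of \cite{goel12} to account for the more general demand function defined through the ability-to-pay function $\alpha_i$ rather than a single budget.

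\textbf{Well-definedness.} First I would argue termination. Because $\alpha_i$ is concave with $\alpha_i(0)=0$, the quantity $\alpha_i(x)/x$ is non-increasing in $x$ and bounded above by $\beta_i = \lim_{x \downarrow 0}\alpha_i(x)/x$. Thus whenever $p_i > \beta_i$ and $\pi_i \geq \alpha_i(x_i)$, the demand $d_i$ drops to zero; similarly $d_i = 0$ once $p_i \geq v_i$. Since $v_i$ is finite and all price increments are $\epsilon$, each agent drops out after $O((\max_i v_i)/\epsilon)$ rounds, and the loop terminates. For $x \in P$, observe that the invariant $x \in P$ is preserved by the clinching step: by construction $\delta_i$ is the maximum amount such that $P^i_{x,d}(\delta_i)=P^i_{x,d}(0)$, so simultaneously adding all the $\delta_i$'s yields a point still in the remnant supply polytope $P_{x,d}$, hence $x+\delta \in P$. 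This uses the standard polymatroidal exchange property that the clinched vector is itself feasible (a property formalized in \cite{goel12}).

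\textbf{Admissibility and individual rationality.} Admissibility is a loop invariant. The update rule is $(x_i,\pi_i) \gets (x_i+\delta_i,\pi_i+p_i\delta_i)$ with $\delta_i \leq d_i = \max\{z_i : (x_i+z_i,\pi_i+p_iz_i)\in \A_i\}$, so the new outcome lies on the segment from $(x_i,\pi_i)$ to $(x_i+d_i,\pi_i+p_id_i)$. Both endpoints are admissible (the first by induction, the second by definition of $d_i$), hence by convexity of $\A_i$ the update preserves admissibility. Individual rationality then follows because demand is zero once $p_i \geq v_i$, so every marginal allocation to $i$ is charged at a rate strictly less than $v_i$, giving $\pi_i < v_i x_i$ whenever $x_i > 0$; combined with admissibility this yields $u_i(x_i,\pi_i) \geq 0$.

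\textbf{Incentive compatibility.} Here I would invoke Myerson's characterization: it suffices to show that $x_i(v_i,v_{-i})$ is monotone non-decreasing in $v_i$ and that payments match the Myerson identity. Monotonicity follows from the standard coupling argument for clinching auctions (cf.\ \cite{Ausubel_multi,dobzinski12,goel12}): if agent $i$ raises his reported value from $v_i$ to $v_i'$, the executions agree until the step where $i$ would have dropped at $v_i$, and from that point on $i$ continues to demand positively and can only accumulate more allocation (the other agents' demands and the clinching machinery are unaffected by $i$'s continued presence in a way that strictly reduces $i$'s clinch). For the payment identity, the amount charged equals $\sum p_i \cdot \delta_i$, which by the same coupling can be written as $v_i x_i(v_i,v_{-i}) - \int_0^{v_i} x_i(u,v_{-i})du$, matching Myerson.

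\textbf{Main obstacle.} The delicate point, compared to \cite{goel12}, is that the demand $d_i$ now depends on the full pair $(x_i,\pi_i)$ through the concave function $\alpha_i$, rather than through a simple residual budget. This means the monotonicity coupling requires an extra check: that the trajectory $(x_i(t),\pi_i(t))$ under report $v_i'$ weakly dominates (componentwise in allocation) the one under $v_i$ for all rounds $t$. I expect this to follow by induction on rounds, using that the demand function $d_i(x_i,\pi_i,p_i) = \max\{z:(x_i+z,\pi_i+p_iz)\in\A_i\}$ is monotone in the right sense thanks to the concavity and downward-closedness of $\A_i$; verifying this monotonicity rigorously is the one place where the argument genuinely differs from the hard-budget case and would be the focus of the detailed proof.
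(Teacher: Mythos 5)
The paper does not actually prove this theorem: it is stated as a citation to Lemmas~3.3--3.5 of \cite{goel12}, with a one-line remark that the standard clinching arguments carry over to the constrained quasi-linear setting. Your reconstruction follows essentially the same route (termination, feasibility via the polymatroidal clinching property, admissibility by convexity, the ascending-auction IC argument), so it is the right shape.

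The one place where I would push back is your ``main obstacle'' paragraph. You worry that, because the demand $d_i$ now depends on the full history $(x_i,\pi_i)$ through $\alpha_i$ rather than just through a residual budget, the monotonicity coupling needs an extra componentwise-dominance check. That concern does not materialize here. Observe that the private value $v_i$ enters the algorithm in exactly one place: the condition ``$d_i=0$ if $p_i \geq v_i$.'' Every other quantity in the demand formula --- $x_i$, $\pi_i$, $p_i$, $\A_i$ --- is either public or determined by the run. Consequently, under two reports $v_i < v_i'$ (all else fixed), the two executions of the auction are literally \emph{identical} (every $x_j$, $\pi_j$, $p_j$, $d_j$ coincides in every round) up until the first round at which $p_i$ reaches $v_i$; only there do the runs diverge, with the low report setting $d_i\get 0$ and the high report keeping $d_i > 0$. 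In particular, at the divergence point agent $i$ holds the same $(x_i,\pi_i)$ in both runs, so there is no trajectory-dominance lemma to prove: the dependence of $d_i$ on $(x_i,\pi_i)$ through the concave $\alpha_i$ is harmless precisely because the pairs $(x_i,\pi_i)$ agree on the shared prefix. From the divergence onward the usual argument closes: any extra units acquired under the higher report are clinched at prices $\geq v_i$, so over-reporting cannot help; any units forgone under the lower report would have been clinched at prices $< v_i$ (by the $\epsilon$-grid assumption $p_i \leq v_i - \epsilon$), so under-reporting cannot help either. This also recovers allocation monotonicity if you want to phrase it through Myerson as you do, but the paper's own route (sketched in the multi-units warm-up) is the direct dominant-strategy argument, which avoids invoking the payment identity and is cleaner here since payments are already determined by the clinching dynamics. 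Everything else in your writeup (the termination bound, feasibility of $x+\delta$ via the polymatroid exchange property from \cite{goel12}, admissibility by convexity of $\A_i$, and IR from $p_i < v_i$ on every clinched unit) is correct.
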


The rest of the paper is dedicated to prove that the auction produces
Pareto-efficient outcomes. The proof of Pareto-efficiency is based on a 
\emph{structural lemma} that relates the tight sets (i.e., sets of agents where
$\sum_{i \in S} x_i = f(S)$) to the dropping price as defined in Section
\ref{subsec:dropping_prices}. We note that the definitions and observations in
that section are valid for any environment.\\

Before we get to those, we introduce some new notation. For a vector $x \in
\R^n$ and a subset $S \subseteq [n]$ we denote $x(S) := \sum_{i \in S} x_i$.
Also, for the remainder of the paper, we focus on a polymatroidal environment
$P$ defined by a submodular function $f$, i.e., $P = \{x \in \R^n_+; x(S) \leq
f(S), \forall S \subseteq [n] \}$.

We will also keep the same notation used in the previous section: $\phi_i$ for
dropping prices, $\beta_i$ for $\lim_{x \downarrow 0} \alpha_i(x)/x$. We will
also assume, as before, that $v_i$ and $\beta_i$ are multiples of $\epsilon$.

\subsection{Pareto efficiency via Structure of tight sets}

Now we are ready to state the central pieces used to prove Pareto efficiency:

\begin{lemma}[Structure of tight sets]\label{lemma:structure_tight_sets}
 In any given iteration of Algorithm \ref{polyhedral-clinching-auction}, after
clinching is performed and demands are re-calculated (point \cp in the
description), if $S$ is the set of players with positive demand, then $S$
is tight in the final outcome, i.e, $x(S) = f(S)$ where $(x,\pi)$ is the final
outcome.
\end{lemma}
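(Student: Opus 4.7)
My plan is to prove the lemma by showing $x(S) \ge f(S)$, where $x$ denotes the final allocation as in the statement; the matching upper bound $x(S) \le f(S)$ is immediate from $x\in P$. Let $\bar x$ and $\bar d$ denote the allocation and demand vectors at point \cp of the iteration under consideration, and let $T = [n]\setminus S$.

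First, I would establish that the $T$-players are frozen from \cp onwards. Since $\bar d_i = 0$ for $i\in T$ and demands are monotonically non-increasing over time (a price increment only decreases $d_i$, and each clinching step decreases $d_i$ by exactly $\delta_i$), player $i$ retains zero demand in every subsequent iteration; because $\delta_i \le d_i$ in the clinching step, $i$ clinches nothing more. Thus $x_i = \bar x_i$ for every $i \in T$, reducing the problem to controlling the total $x(S) - \bar x(S)$ added to $S$ after \cp.

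Second, I would use the clinching computation to exhibit a tight-set structure at \cp. For each $i\in S$ the post-clinch demand $\bar d_i > 0$ forces $\delta_i$ to be strictly less than the pre-clinch $d_i$, so in the polymatroid formula for the clinching amount the minimum over $W\ni i$ is attained at some witness $W_i\ni i$ with $f(W_i) - x_{\text{old}}(W_i) - d_{\text{old}}(W_i\setminus i) = \delta_i$, where $x_{\text{old}}$ and $d_{\text{old}}$ are the pre-clinch allocation and demand. Substituting $x_{\text{old}} = \bar x - \delta$ and $d_{\text{old}} = \bar d + \delta$ yields the clean identity $f(W_i) = \bar x(W_i) + \bar d(W_i\setminus\{i\})$. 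I would then aggregate these sets into $W^\star := \bigcup_{i\in S}W_i$, and combine submodularity (closure of tight sets under union) with $\bar d_j = 0$ for $j \in T$ to show that the contracted polymatroid on $S$ has rank exactly $f(S) - \bar x(S)$; a dropping-price argument in the spirit of Section \ref{subsec:dropping_prices} then shows that the clinching from \cp to termination actually exhausts this rank (each $j\in S$ either fills its admissibility or is blocked by a tight set that persists in the final outcome), giving $x(S) = f(S)$.

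The main obstacle will be the aggregation step. The classical union-of-tight-sets lemma only applies when all sets are tight with respect to the same allocation, whereas here $W_i$ is tight with respect to the augmented allocation $\bar x + \bar d\cdot\mathbf{1}_{W_i\setminus\{i\}}$, which varies with $i$. Making the union argument go through reduces to establishing the invariant $f(S\cup W_T) - f(S) \ge \bar x(W_T)$ for every $W_T \subseteq T$, i.e.\ that past allocations to $T$ do not ``overspend'' the $S$-marginal rank of $f$. This invariant must be proved by tracking each $T$-player's clinching history back to the iteration at which it dropped, and is the technical heart of the lemma, tying it directly to the dropping-price ordering.
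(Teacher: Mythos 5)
Your two opening observations are sound: once $\bar d_i = 0$ at point \cp{} it stays zero (demands are non-increasing, and $\delta_i \le d_i$), so $T$-players are frozen; and the algebraic substitution giving $f(W_i) = \bar x(W_i) + \bar d(W_i\setminus\{i\})$ for a pre-clinch minimizing witness $W_i$ is correct. The problem is that everything after that is a plan, not a proof, and the piece you flag as ``the main obstacle'' is in fact the entire lemma.

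Concretely: the invariant you reduce to, $f(S\cup W_T) - f(S) \ge \bar x(W_T)$ for all $W_T\subseteq T$, is (for $W_T = T$) the statement $\bar x(T) + f(S) \le f([n])$. Combined with the all-goods-sold fact $x^f([n]) = f([n])$ and feasibility $x^f(S) \le f(S)$, this is \emph{equivalent} to $x^f(S) = f(S)$. So you have not reduced the lemma to something easier; you have restated it. You then say this invariant ``must be proved by tracking each $T$-player's clinching history back to the iteration at which it dropped,'' which is exactly the inductive argument the paper carries out --- but carrying it out is where all the difficulty lives. The paper's proof maintains the invariant $x([n]\setminus T) = f([n]) - f(T)$ from one visit of \cp{} to the next, and the inductive step requires (i) showing that the players who drop together with $\hi$ are precisely the $\hi$-unsaturated ones, and (ii) the decomposition identity $f_\psi(X) = f(S) + \psi(X\cap U)$ (Lemma~\ref{lemma:decomposition}), which is how one gets a genuine tight set out of the heterogeneous witnesses. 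Your aggregation of the $W_i$ by union cannot work directly, as you note, precisely because the $W_i$ are tight with respect to different reference points; the saturation machinery is the paper's substitute for that union, and nothing in your proposal supplies an equivalent. Likewise the final ``dropping-price argument exhausts this rank'' is not argued: you would need the maximality-of-clinching and all-goods-sold invariants (Lemmas~\ref{lemma:maximality} and \ref{lemma:all_goods_sold}) to convert a rank bound at \cp{} into a statement about the terminal allocation, and neither appears in your sketch. So the proposal identifies the right landmarks but has a genuine gap at the step it labels the ``technical heart.''
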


\begin{lemma}\label{lemma:dropping_prices}
 If a player clinches his entire demand in a certain iteration of the auction,
then there is a player that in the same iteration drops without clinching his
entire demand. 
\end{lemma}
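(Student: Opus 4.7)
\textit{Proof plan.}
The plan is to exploit the idempotency of the clinching step: immediately after clinching in any iteration, re-running the clinching procedure on the unchanged post-update state would yield $\delta = 0$ for every player. This is a direct consequence of the definition of the clinching amount $\delta_k$ as the maximal value satisfying $P^k_{x,d}(\delta_k) = P^k_{x,d}(0)$; by construction $\delta_k$ exhausts all the room that $k$ has without restricting the others, so nothing is left to clinch.

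With idempotency in hand, suppose some player $i$ clinches his entire demand ($\delta_i = d_i > 0$) in iteration $t$, and let $j$ denote the player whose price was incremented at the end of iteration $t-1$. The state at the beginning of iteration $t$ differs from the state at point \cp of iteration $t-1$ only in that $p_j$ has grown by $\epsilon$; consequently, all allocations and all demands other than possibly $d_j$ are unchanged, and $d_j$ can only weakly decrease. If $d_j$ did not decrease, then the two states coincide and by idempotency the clinching in iteration $t$ would yield $\delta = 0$ for every player, contradicting $\delta_i > 0$. Hence $d_j$ must have strictly decreased between iterations $t-1$ and $t$.

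I would then show that in fact $d_j$ became zero at the start of iteration $t$. In the multi-unit warm-up the argument is a short inequality manipulation: idempotency at \cp of $t-1$ is equivalent to $d([n]\setminus j) \geq 1 - x([n])$ (otherwise $j$ could clinch a positive amount at \cp of $t-1$), while the entire-demand clinching of $i$ in iteration $t$ is equivalent to $d([n]) + x([n]) \leq 1$; combining the two, using that only $d_j$ changed between the iterations, forces the new $d_j$ to be nonpositive and hence zero. For general polymatroidal environments, the same subtraction goes through after replacing $1$ by $f(S)$ for a tight set $S$ containing both $i$ and $j$ (provided by Lemma \ref{lemma:structure_tight_sets} applied at \cp of iteration $t-1$), together with submodularity of $f$ to preserve the relevant inequalities when restricting to this tight set.

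This finishes the argument: $j$'s demand transitions from positive (at \cp of $t-1$) to zero (at the start of iteration $t$) purely as a consequence of the price increment at the end of iteration $t-1$, not by $j$ clinching his entire demand in iteration $t$. So $j$ is the player described in the lemma. The main technical obstacle is the third step: identifying the right tight set in the polymatroidal case and carrying out the submodular inequality manipulation to rule out partial decreases in $d_j$ as sufficient for $i$ to clinch his entire demand---in the multi-unit warm-up this is a one-line calculation, but in the polymatroidal setting one has to combine idempotency with the tight-set structure from Lemma \ref{lemma:structure_tight_sets} carefully.
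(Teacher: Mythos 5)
Your overall strategy matches the paper's: identify the player $j = \hi$ whose price was just incremented, observe via idempotency (Lemma~\ref{lemma:maximality}) that if $d_j$ is unchanged nothing can be clinched, and then argue that for any other player to clinch his entire demand, $d_j$ must actually have dropped to zero — so $j$ is the player who drops without clinching. Your multi-unit calculation in step three is correct and is essentially the specialization of the paper's argument to $f(S)=1$.

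The gap is in your polymatroidal generalization. You propose to obtain the needed tight set from Lemma~\ref{lemma:structure_tight_sets}, but this is circular: in the paper, the inductive proof of Lemma~\ref{lemma:structure_tight_sets} explicitly invokes Lemma~\ref{lemma:dropping_prices} (to conclude that when $d_\hi$ does not drop to zero, the set $T$ of positive-demand players is unchanged). Moreover, Lemma~\ref{lemma:structure_tight_sets} asserts tightness of $S$ with respect to the \emph{final} allocation $x^f$, not the current $x$, so it would not supply the ``$f(S)$ replacing $1$'' that your inequality subtraction needs in mid-execution. What actually replaces the one-line multi-unit manipulation in the polymatroidal case is the pair of invariants proved independently in Appendix~\ref{appendix:proofs}: Invariant~I (Lemma~\ref{lemma:maximality}, which you already use) and Invariant~III, the self-unsaturation property $f_{x+(0,d_{-i})}([n]) = f_{x+(0,d_{-i})}([n]\setminus i) + x_i$ (Lemma~\ref{lemma:self-saturation}), combined with the capping identity of Lemma~\ref{lemma:auxiliary} to split $f_{x+(0,d'_{-i})}([n])$ into cases. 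Those invariants are maintained by a separate induction that does not depend on Lemma~\ref{lemma:dropping_prices}, which is precisely what breaks the circularity you would otherwise create. Your plan would need to be rerouted through those invariants rather than through the tight-sets lemma.
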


We defer proving those lemmas until the end of Section~\ref{sec:saturation}.
Before, we discuss how to use this Lemma to prove Pareto-efficiency. Since it is
clear that the set of players with positive demand is shrinking, it gives us a
natural nested family of tight sets. Moreover, it is a family where we can
bound the prices for which they acquire goods during the process:

\begin{cor}\label{cor:structure}
Given an execution of \ref{polyhedral-clinching-auction}, let $i_1, \hdots,
i_k$ be the agents who drop their demand to zero without clinching their entire
demand sorted in reverse order in which those events (demand dropping to
zero) happens. Also, let $\phi_{i_1} \geq \hdots \geq \phi_{i_k}$ be the
dropping prices for each agent. Then by taking $S_j$ to be the set of agents
with positive demand just before player $i_j$ dropped his demand to zero, we
have a nested family of tight subsets:
$$\emptyset = S_0 \subset S_1 \subset S_2 \subset \hdots \subset S_k = [n],
\qquad x(S_j) = f(S_j), \forall j$$
with the property that: $i_j \in T_j := S_j \setminus S_{j-1}$ and all
players $i \in T_j \setminus \{ i_j \}$ clinched their entire demand and
$\phi_i \in \{ \phi_{i_j} - \epsilon, \phi_{i_j} \}$.
\end{cor}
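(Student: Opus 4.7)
The plan is to treat the corollary as a packaged consequence of the two structural lemmas applied iteratively along the reverse timeline of the non-clinch drops. First, I would formally take $S_j$ to be the set of players with positive demand at the checkpoint \cp in the iteration immediately before $i_j$'s demand goes to zero. Because allocations are non-decreasing, prices are non-decreasing, and each $\alpha_i$ is concave, demands are monotonically non-increasing throughout the auction. Since $i_1,\ldots,i_k$ are listed in reverse chronological order of their drops, this monotonicity yields $S_1 \subseteq S_2 \subseteq \ldots \subseteq S_k$; strict containment follows because $i_{j+1} \in S_{j+1}$ (he has not yet dropped at that earlier point) while $i_{j+1} \notin S_j$ (he has already dropped by the later time of $S_j$).

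Next I would handle the boundary facts $S_k = [n]$ and $x(S_j) = f(S_j)$. For $S_k = [n]$, I invoke Lemma \ref{lemma:dropping_prices} contrapositively: if any agent had clinched his entire demand in some iteration preceding $i_k$'s drop, the lemma would force a non-clinch drop in that earlier iteration, contradicting that $i_k$ is the first such event in time. Hence at the checkpoint preceding $i_k$$'$s drop no player has yet reached zero demand and $S_k = [n]$. The tightness $x(S_j) = f(S_j)$ then follows directly from Lemma \ref{lemma:structure_tight_sets}, since each $S_j$ is by construction the set of players with positive demand at a checkpoint \cp.

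Finally I would characterize $T_j = S_j \setminus S_{j-1}$. By construction $i_j \in T_j$. For any other $i \in T_j$, he has positive demand at the time of $S_j$ but zero demand at the time of $S_{j-1}$, so his drop happens in the intervening interval. Since $i$ is not listed among the non-clinch droppers, he must have clinched his entire demand; Lemma \ref{lemma:dropping_prices} applied to his clinching event forces a non-clinch drop in the same iteration, and by the reverse ordering that simultaneous non-clinch drop can only be $i_j$ himself. Consequently all clinch-droppers in $T_j$ drop in the same iteration as $i_j$. Within that single iteration only one agent's price $p_{\hat{i}}$ is incremented by $\epsilon$, so the prices faced by different agents in $T_j$ at the moments of their respective drops differ from $\phi_{i_j}$ by at most one $\epsilon$ step, giving $\phi_i \in \{\phi_{i_j}-\epsilon, \phi_{i_j}\}$. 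The main obstacle I expect is precisely this last bookkeeping step: one must carefully track whether each $i \in T_j$ has already played the role of $\hat{i}$ in the shared iteration, since this is exactly what determines whether $\phi_i$ equals $\phi_{i_j}$ or $\phi_{i_j}-\epsilon$.
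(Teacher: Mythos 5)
Your proof is correct and follows essentially the same route as the paper: build the $S_j$ from the checkpoints, derive nesting and strictness from the monotone shrinkage of the positive-demand set, get $S_k = [n]$ and the characterization of $T_j$ from Lemma \ref{lemma:dropping_prices}, tightness from Lemma \ref{lemma:structure_tight_sets}, and the price bound from the round-robin increment schedule. The paper's own proof is much terser (it simply cites the two lemmas and observes that in the iteration when $i_j$ drops all prices lie in $\{\phi_{i_j}-\epsilon, \phi_{i_j}\}$), but the details you fill in — in particular the argument that any clinch-all drop of $i \in T_j$ must coincide with $i_j$'s drop iteration, since otherwise Lemma \ref{lemma:dropping_prices} would produce a non-clinch drop strictly between $i_j$ and $i_{j-1}$ in time, contradicting that they are consecutive — are exactly the unpacked version of what the paper's citation is invoking.
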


\begin{proof}
 The proof follows directly from Lemma \ref{lemma:dropping_prices}. The fact
that for $i \in T_j$, $\phi_i \in \{\phi_{i_j} - \epsilon, \phi_{i_j}\}$ follows
from Lemma \ref{lemma:structure_tight_sets} since in the iteration player $i_j$
drops, the prices of all the agents are either $\phi_{i_j}$ or $\phi_{i_j} -
\epsilon$.
\end{proof}

The reader familiar with \cite{goel12} will note the similarity of the nested
family in Corollary \ref{cor:structure} and the nested family in the proof
of Lemma 3.8 in \cite{goel12}. From the proof in \cite{goel12} it should be
clear that getting such a tight family is the main ingredient in proving
Pareto-efficiency. The difficulty here is that we need a family which is somehow
tied to prices, which wasn't necessary in \cite{goel12}. There, one could simply
use values $v_i$ are proxies prices $\phi_i$, since the admissible sets were
very simple.

\begin{theorem}
 The outcome of the Polyhedral Clinching Auction (Algorithm
\ref{polyhedral-clinching-auction}) is Pareto-efficient.
\end{theorem}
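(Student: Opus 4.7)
The plan is to mimic the multi-units Pareto-optimality proof, but using the nested family of tight sets from Corollary~\ref{cor:structure} in place of the simple three-part partition $L\cup\{k\}\cup H$. Assume for contradiction that $(x',\pi')$ is a Pareto improvement over $(x,\pi)$: $(x'_i,\pi'_i)\in\A_i$, $x'\in P$, $u_i(x'_i,\pi'_i)\geq u_i(x_i,\pi_i)$, $\sum_i\pi'_i\geq\sum_i\pi_i$, and $\sum_i v_i x'_i>\sum_i v_i x_i$. Apply Corollary~\ref{cor:structure} to obtain the nested tight family $\emptyset=S_0\subset S_1\subset\cdots\subset S_k=[n]$ with $T_j=S_j\setminus S_{j-1}$, distinguished droppers $i_j\in T_j$, and dropping prices $\phi_{i_1}\geq\phi_{i_2}\geq\cdots\geq\phi_{i_k}$.

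The key per-agent inequality to establish is that for every $j$ and every $i\in T_j$,
\[
\pi_i-\pi'_i\;\geq\;\phi_{i_j}\,(x_i-x'_i),
\]
with strict inequality whenever $x_i<x'_i$. I would prove this by a four-way case split. For a clinching agent $i\in T_j\setminus\{i_j\}$ with $x_i\geq x'_i$, use $v_i>\phi_i\geq\phi_{i_j}-\epsilon$ plus the multiple-of-$\epsilon$ assumption to conclude $v_i\geq\phi_{i_j}$, and apply the Pareto improvement inequality $v_i x'_i-\pi'_i\geq v_i x_i-\pi_i$. For such $i$ with $x_i<x'_i$, exploit that he clinched his entire demand at price $\phi_i\leq\phi_{i_j}$, so $(x_i+\kappa,\pi_i+\phi_i\kappa)\notin\A_i$ for every $\kappa>0$; since $(x'_i,\pi'_i)\in\A_i$, this gives the strict inequality. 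For the dropper $i_j$, split on whether $\phi_{i_j}=v_{i_j}$ (use the Pareto condition directly) or $\pi_{i_j}=\beta_{i_j}x_{i_j}$ with $\phi_{i_j}=\beta_{i_j}+\epsilon$ (for $x_{i_j}\geq x'_{i_j}$ use $v_{i_j}\geq\phi_{i_j}$; for $x_{i_j}<x'_{i_j}$ use $\pi'_{i_j}\leq\alpha_{i_j}(x'_{i_j})\leq\beta_{i_j}x'_{i_j}$ and that $\beta_{i_j}<\phi_{i_j}$ flips to a strict inequality when multiplied by the negative quantity $x_{i_j}-x'_{i_j}$).

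Summing over $i\in T_j$ yields $\sum_{i\in T_j}(\pi_i-\pi'_i)\geq \phi_{i_j}\bigl(x(T_j)-x'(T_j)\bigr)$, with strict inequality if any $i\in T_j$ has $x_i<x'_i$. Let $a_j=x(T_j)-x'(T_j)$. Since the $S_j$ are tight and $x'\in P$, the partial sums satisfy $\sum_{j'\leq j}a_{j'}=x(S_j)-x'(S_j)=f(S_j)-x'(S_j)\geq 0$. Combined with $\phi_{i_j}-\phi_{i_{j+1}}\geq 0$, an Abel summation
\[
\sum_{j=1}^k \phi_{i_j} a_j \;=\; \sum_{j=1}^{k-1}(\phi_{i_j}-\phi_{i_{j+1}})\sum_{j'\leq j}a_{j'} \;+\; \phi_{i_k}\sum_{j=1}^k a_j
\]
shows $\sum_j\phi_{i_j}a_j\geq 0$, hence $\sum_i\pi_i\geq\sum_i\pi'_i$.

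Combined with the Pareto-improvement requirement $\sum_i\pi'_i\geq\sum_i\pi_i$, revenues are equal, so all of the per-agent inequalities must hold with equality. In particular the strict-inequality clause forces $x_i\geq x'_i$ for every $i$, whence $\sum_i v_i x_i\geq\sum_i v_i x'_i$, contradicting the welfare-increase clause. The main obstacle is the clinching case $x_i<x'_i$: one must argue carefully from concavity of $\alpha_i$ and maximality of $d_i$ that the boundary of $\A_i$ at $(x_i,\pi_i)$ has ``slope'' at most $\phi_i$, so that no admissible extension can pay less than $\pi_i+\phi_i(x'_i-x_i)$; everything else is bookkeeping once Corollary~\ref{cor:structure} is in hand.
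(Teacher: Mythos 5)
Your proposal follows the paper's proof very closely: it invokes Corollary~\ref{cor:structure} to obtain the nested tight family $\emptyset=S_0\subset\cdots\subset S_k=[n]$, establishes the per-agent inequality $\pi_i-\pi'_i\geq\phi_{i_j}(x_i-x'_i)$ by the same case analysis (distinguishing clinchers from the dropper $i_j$, and within the dropper the two reasons for dropping), and then uses Abel summation over the tight sets to conclude $\sum_i\pi_i\geq\sum_i\pi'_i$. The Abel step is written out more explicitly than in the paper and is correct: $A_j:=x(S_j)-x'(S_j)=f(S_j)-x'(S_j)\geq0$ by feasibility, and $\phi_{i_j}-\phi_{i_{j+1}}\geq0$.

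There is, however, a genuine gap in your closing step. You assert that the per-agent inequality is \emph{strict whenever} $x_i<x'_i$, and from this you deduce $x_i\geq x'_i$ for all $i$, hence $\sum_i v_i x_i\geq\sum_i v_i x'_i$. But your own derivation for the dropper $i=i_j$ in the subcase $\phi_{i_j}=v_{i_j}$ gives only $\pi_{i_j}-\pi'_{i_j}\geq v_{i_j}(x_{i_j}-x'_{i_j})=\phi_{i_j}(x_{i_j}-x'_{i_j})$, which is \emph{not} strict even when $x_{i_j}<x'_{i_j}$. So equality of revenues does not rule out $x_{i_j}<x'_{i_j}$ for such a dropper, and the inference $\sum_i v_i x_i\geq\sum_i v_i x'_i$ does not follow from your stated claim.

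The paper closes the loop differently. Once all inequalities are forced to be tight, one observes that any $i\in T_j$ with $x_i\neq x'_i$ must have $v_i=\phi_{i_j}$: for a clincher $i\neq i_j$ with $x_i>x'_i$, tightness of $\pi_i-\pi'_i\geq v_i(x_i-x'_i)\geq\phi_{i_j}(x_i-x'_i)$ forces $v_i=\phi_{i_j}$ (the case $x_i<x'_i$ is already excluded by strictness there), and for $i_j$ either one is in case (i) so $v_{i_j}=\phi_{i_j}$ outright, or in case (ii) strictness excludes $x_{i_j}<x'_{i_j}$ and tightness again forces $v_{i_j}=\phi_{i_j}$ if $x_{i_j}>x'_{i_j}$. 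Hence $\sum_{i\in T_j}v_i(x_i-x'_i)=\phi_{i_j}\bigl(x(T_j)-x'(T_j)\bigr)$ for every $j$, and summing over $j$ (with the same Abel telescoping, which is now an equality) yields $\sum_i v_i(x_i-x'_i)=0$, contradicting $\sum_i v_ix'_i>\sum_i v_ix_i$. You should replace your final two sentences with this argument; everything before them is sound and matches the paper.
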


\begin{proof}
 Let $(x,\pi)$ be the outcome of the clinching auction and assume that there is
an alternative outcome $(x', \pi')$ such that $v_i \cdot x'_i - \pi'_i \geq v_i
\cdot x_i - \pi_i, \forall i$, $\sum_i \pi'_i \geq \sum_i \pi_i$ and at least
one of those inequalities is strict, which means that the sum of those
inequalities is a strict inequality: $\sum_i v_i \cdot x'_i > \sum_i v_i \cdot
x_i$. Also, let $x'$ is a feasible point of $P$, $(x'_i, \pi'_i) \in \A_i,
\forall i$.

Assuming the structure in Corollary \ref{cor:structure}, first we show that for
$i \in T_j$, $\pi_i - \pi'_i \geq \phi_{i_j}(x_i - x'_i)$. In order to show
that, we consider three cases:
\begin{itemize}
 \item $x_i \geq x'_i$, $i \neq i_j$ : since $i$ clinched his entire demand,
then $\phi_i \leq v_i - \epsilon$ and since $\phi_i \in \{ \phi_{i_j} -
\epsilon, \phi_{i_j}
\}$, then: $\phi_{i_j} \leq v_i$. Then $\pi_i - \pi'_i \geq v_i \cdot (x_i -
x'_i) \geq \phi_{i_j} \cdot (x_i - x'_i)$.
 \item $x_i < x'_i$, $i \neq i_j$ : player $i$ clinched his entire demand at
price $\phi_i$. By the definition of demand for any $\kappa > 0$, $(x_i +
\kappa, \pi_i + \phi_i \cdot \kappa) \notin \A_i$. In particular, for $\kappa =
x'_i - x_i$. Since $(x'_i, \pi'_i) \in \A_i$, it must be the case that: $\pi'_i
< \pi_i + \phi_i \cdot (x'_i - x_i)$. Now using that $\phi_i \leq
\phi_{i_j}$ and re-arranging the inequality we get $\pi_i - \pi'_i >
\phi_{i_j}(x_i - x'_i)$.
 \item $i = i_j$ : in this case: (i) either $\phi_{i_j} = v_{i_j}$, in which
case $\pi_{i_j} - \pi'_{i_j} \geq v_{i_j} \cdot (x_{i_j} - x'_{i_j}) =
\phi_{i_j} \cdot (x_{i_j} - x'_{i_j})$, (ii) or $\phi_{i_j} > \beta_{i_j}$ and
$\pi_{i_j} = \beta_{i_j} \cdot x_{i_j}$. Since $\pi'_{i_j} \leq \beta_{i_j}
\cdot x_{i_j}$, we have that: $\pi_{i_j} - \pi'_{i_j} \geq \beta_{i_j} \cdot
(x_{i_j} - x'_{i_j}) > \phi_{i_j} \cdot (x_{i_j} - x'_{i_j})$ if $x_{i_j} 
\leq x'_{i_j}$. In the case where $x_{i_j} \geq x'_{i_j}$, we can simply use
that $\phi_{i_j} \leq v_{i_j}$ and then: $\pi_{i_j} - \pi'_{i_j} \geq v_{i_j}
\cdot (x_{i_j} - x'_{i_j}) \geq \phi_{i_j} \cdot (x_{i_j} - x'_{i_j})$.
\end{itemize}

Now, summing this inequality for all $i \in T_j$, we get:
$$\sum_{i \in T_j} \pi_i - \pi'_i \geq \phi_{i_j} \cdot \sum_{i \in T_j} x_i -
x'_i = \phi_{i_j} \cdot (x(T_j) - x'(T_j)) \geq 0$$
since $x(T_j) = f(T_j) \geq x'(T_j)$. Therefore:
$$\sum_{i \in S_j} \pi_i - \pi'_i = \sum_{k \leq j} \sum_{i \in T_k} \pi_i -
\pi'_i \geq \phi_{i_j} \cdot (x(S_j) - x'(S_{j})) +  \sum_{k < j} (\phi_{i_k} -
\phi_{i_{k+1}})\cdot (x(S_k) - x'(S_{k})) \geq 0$$
For $j = k$, $S_k = [n]$, so: $0 \geq \sum_{i \in [n]} \pi_i - \pi'_i \geq 0$,
where the first inequality comes from Pareto-efficiency and the second
inequality comes from the line above. This means that all inequalities along
the way used to prove the inequality above should be tight. This means in
particular that for all $i \in T_j$, either $x_i = x'_i$ or $v_i = \phi_{i_j}$,
since $x_i \neq x'_i$ and $v_i > \phi_{i_j}$ would imply a strict inequality in
some of the cases above. Therefore:
$$\sum_{i \in T_j} v_i \cdot (x_i - x'_i) = \sum_{i \in T_j, v_i = \phi_{i_j}}
\phi_{i_j} \cdot (x_i - x'_i) = \sum_{i \in T_j}
\phi_{i_j} \cdot (x_i - x'_i) \geq \phi_{i_j} \cdot (f(T_j) - x'(T_j)) \geq 0$$
Summing for all $j$, we get that $\sum_i v_i \cdot x_i \geq \sum_i v_i
\cdot x'_i$ contradicting the assumption that  $\sum_i v_i \cdot x'_i >
\sum_i v_i \cdot x_i$.
\end{proof}

\noindent \textbf{Relation to the proof for hard budget constraints:} In the
proof of Lemma 3.8 in \cite{goel12} for $\A_i = \{(x_i, \pi_i) \in \R^2_+; \pi_i
\leq B_i\}$, the case $x_i < x'_i, i \neq i_j$ is simpler, since one can use
that
$\pi_i = B_i \geq \pi'_i$ together with $v_i x_i < v_i x'_i$ to prove that:
$\pi_i - \pi'_i \geq 0 > v_i (x_i - x'_i)$. The entire proof can be done
therefore using values $v_{i}$ instead of dropping prices $\phi_i$. Since there
is not a global upper bound on payment, this approach
does not work for a general admissible set.

\subsection{Basic Facts on Polymatroids and Clinching}

In the previous subsection we showed that the Pareto-optimality follows from
Lemmas \ref{lemma:dropping_prices} and \ref{lemma:structure_tight_sets}.
Proving those statements is the most technically challenging part of the paper.
Before we do it, we would like to review some elementary facts about
polymatroids. See \cite{schrijver-book} for example, for an extensive exposition
on polymatroids.

\begin{lemma}[Uncrossing]\label{lemma:uncrossing}
 If $P$ is the polymatroid defined by $f:2^{[n]} \rightarrow \R_+$, and $x \in
P$ such that for subsets $S, T \subseteq [n]$, $x(S) = f(S)$ and $x(T) = f(T)$
(we say those sets are tight), then $x(S \cap T) = f(S \cap T)$ and $x(S \cup
T) = f(S \cup T)$.
\end{lemma}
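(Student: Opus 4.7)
The plan is to exploit the standard two-way squeeze: modularity of $x$ gives one identity, while submodularity of $f$ plus membership in $P$ give two inequalities that must collapse to equalities.

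First I would observe that for any vector $x \in \R^n$ and any two subsets $S, T \subseteq [n]$, the function $U \mapsto x(U) = \sum_{i \in U} x_i$ is modular, so
\[
x(S \cap T) + x(S \cup T) = x(S) + x(T).
\]
Using the hypothesis that $S$ and $T$ are tight, the right-hand side equals $f(S) + f(T)$.

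Next I would invoke submodularity of $f$, which gives $f(S \cap T) + f(S \cup T) \leq f(S) + f(T)$, and the fact that $x \in P$, which gives $x(S \cap T) \leq f(S \cap T)$ and $x(S \cup T) \leq f(S \cup T)$. Chaining these with the previous display yields
\[
f(S) + f(T) = x(S \cap T) + x(S \cup T) \leq f(S \cap T) + f(S \cup T) \leq f(S) + f(T),
\]
so every inequality in the chain must be an equality. In particular the two individual inequalities $x(S \cap T) \leq f(S \cap T)$ and $x(S \cup T) \leq f(S \cup T)$ must each be tight, which is precisely the conclusion.

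There is no real obstacle here; the only thing to be careful about is that both individual inequalities (not merely their sum) are forced to be tight. This follows because if either were strict, then summing with the other (nonstrict) inequality would give a strict inequality $x(S\cap T) + x(S\cup T) < f(S\cap T)+f(S\cup T)$, contradicting the equality at the left end of the chain. Hence both tightness statements hold simultaneously.
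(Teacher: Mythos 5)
Your proof is correct and is the standard textbook uncrossing argument (modularity of $x$ plus submodularity of $f$ plus $x \in P$, forcing the chain of inequalities to collapse). The paper itself does not prove this lemma but simply cites it as an elementary fact about polymatroids, referring the reader to Schrijver's book, so there is no authorial proof to compare against; your argument is exactly the one a reader would be expected to supply.
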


\begin{lemma}[Polymatroid $\cap$ Box]\label{lemma:polymatroid_box}
 If $P$ is the polymatroid defined by $f:2^{[n]} \rightarrow \R_+$, then
$P_{x,d} = \{y \in \R^n_+; x+y \in P; y \leq d \}$ is the polymatroid defined by
the (possibly non-monotone) submodular function $\tilde{f}(S) = \min_{T
\subseteq S} [ f(T) - x(T) + d(S \setminus T) ]$.
\end{lemma}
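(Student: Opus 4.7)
The plan is to split the statement into two parts: first, the set identity $P_{x,d} = \{y \in \R^n_+ : y(S) \leq \tilde f(S)\ \forall S \subseteq [n]\}$, and second, submodularity of $\tilde f$. Together these say that $P_{x,d}$ is exactly the polymatroid associated with $\tilde f$.

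The set identity is essentially bookkeeping. For the forward inclusion, given $y \in P_{x,d}$, $S \subseteq [n]$ and any $T \subseteq S$, decompose $y(S) = y(T) + y(S \setminus T)$; the constraint $x + y \in P$ yields $y(T) \leq f(T) - x(T)$, while $y \leq d$ yields $y(S \setminus T) \leq d(S \setminus T)$. Adding and minimizing over $T$ gives $y(S) \leq \tilde f(S)$. Conversely, suppose $y \geq 0$ satisfies $y(S) \leq \tilde f(S)$ for all $S$: taking $T = S$ in the definition of $\tilde f(S)$ gives $x(S) + y(S) \leq f(S)$ (so $x + y \in P$), and taking $S = \{i\}, T = \emptyset$ gives $y_i \leq d_i$, so $y \in P_{x,d}$.

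The technical heart is submodularity of $\tilde f$. For subsets $A, B \subseteq [n]$, let $T_A \subseteq A$ and $T_B \subseteq B$ attain the minima defining $\tilde f(A)$ and $\tilde f(B)$ respectively. I propose to use $T_A \cap T_B$ and $T_A \cup T_B$ as (possibly suboptimal) witnesses for the minima at $A \cap B$ and $A \cup B$, which immediately yields an upper bound on $\tilde f(A \cap B) + \tilde f(A \cup B)$ consisting of three pieces: an $f$-piece, handled by submodularity of $f$; an $x$-piece, handled by modularity of $x$; and a $d$-piece, which requires verifying the identity
\begin{equation*}
d\bigl((A \cap B) \setminus (T_A \cap T_B)\bigr) + d\bigl((A \cup B) \setminus (T_A \cup T_B)\bigr) = d(A \setminus T_A) + d(B \setminus T_B).
\end{equation*}

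The main obstacle is keeping the $d$-piece honest. It is crucial here that the minima in the definition of $\tilde f$ are taken over $T \subseteq S$, which forces $T_A \subseteq A$ and $T_B \subseteq B$, hence $T_A \cap T_B \subseteq A \cap B$ and $T_A \cup T_B \subseteq A \cup B$. Under these containments, the two left-hand terms become $d(A \cap B) - d(T_A \cap T_B)$ and $d(A \cup B) - d(T_A \cup T_B)$, and the identity reduces to the standard counting fact $|A \cap B| + |A \cup B| = |A| + |B|$ applied (with multiplicities given by $d$) both to $\{A,B\}$ and to $\{T_A,T_B\}$. Combining the three pieces yields $\tilde f(A \cap B) + \tilde f(A \cup B) \leq \tilde f(A) + \tilde f(B)$, completing the proof.
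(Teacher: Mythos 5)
Your proof is correct, and it is the standard argument. Note that the paper itself offers no proof of this lemma; it is stated as a basic fact about polymatroids with a pointer to Schrijver's book, so there is no paper proof to compare against. Your decomposition $y(S)=y(T)+y(S\setminus T)$ for the forward inclusion, the specializations $T=S$ and $(S,T)=(\{i\},\emptyset)$ for the backward inclusion, and the uncrossing argument via the witnesses $T_A\cap T_B$ and $T_A\cup T_B$ are exactly how one would expect this to be proved, and the $d$-piece identity you isolate does hold because $T_A\subseteq A$, $T_B\subseteq B$, and $d$ is modular. One small point worth making explicit: both the step $y_i\leq d_i$ (which uses $\tilde f(\{i\})\leq f(\emptyset)+d_i$) and the normalization $\tilde f(\emptyset)=0$ rely on $f(\emptyset)=0$; this is part of the standard polymatroid definition but deserves a word since it is doing real work in the backward inclusion.
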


Now, we also review a basic fact about clinching, which was proved in
\cite{goel12} :

\begin{lemma}[Constructive Clinching]\label{lemma:constructive_clinchign_1}
 Given current allocation $x$ and payments $\pi$ and a polymatroid $P$, the
clinched amount $\delta_i$ can be calculated as $\delta_i = [ \max_{y \in
P_{x,d}} \one^t y ] - [ \max_{y \in P_{x,d}} \one^t_{-i} y_{-i} ]$. An
alternative description is: $\delta_i = [\tilde{f}([n]) - \tilde{f}([n]
\setminus i)]^+$
for the $\tilde{f}$ function defined in the previous lemma.
\end{lemma}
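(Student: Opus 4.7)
The strategy is to translate clinching into statements about the submodular function $\tilde{f}$ supplied by Lemma \ref{lemma:polymatroid_box}, and then invoke standard polymatroid max-weight identities. Write $\tilde{P} := P_{x,d} = \{y \geq 0 : y(S) \leq \tilde{f}(S)\ \forall S\}$. By definition, $\delta_i$ is the largest $z_i \geq 0$ such that every $y_{-i} \in \tilde{P}^i(0)$ lifts to $(z_i, y_{-i}) \in \tilde{P}$. The constraints of $\tilde{P}$ indexed by sets $S \subseteq [n] \setminus \{i\}$ reduce to $y_{-i}(S) \leq \tilde{f}(S)$ and are already enforced by membership in $\tilde{P}^i(0)$; the remaining constraints, indexed by sets of the form $S \cup \{i\}$, read $z_i + y_{-i}(S) \leq \tilde{f}(S \cup \{i\})$. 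Taking the worst $y_{-i}$ for each $S$ gives
\[
\delta_i = \min_{S \subseteq [n]\setminus\{i\}} \Bigl[ \tilde{f}(S \cup \{i\}) - \max_{y_{-i} \in \tilde{P}^i(0)} y_{-i}(S) \Bigr]^+.
\]

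Next, I would observe that $\tilde{P}^i(0)$ is itself a polymatroid on $[n] \setminus \{i\}$ with rank function $g(S) = \min(\tilde{f}(S), \tilde{f}(S \cup \{i\}))$, obtained by splitting the original $\tilde{P}$-constraints according to whether their index set contains $i$. The classical polymatroid max-weight identity then gives $\max_{y_{-i} \in \tilde{P}^i(0)} y_{-i}(S) = g(S)$, so $\delta_i = \min_S [\tilde{f}(S \cup \{i\}) - \tilde{f}(S)]^+$. Submodularity of $\tilde{f}$ makes the marginal $S \mapsto \tilde{f}(S \cup \{i\}) - \tilde{f}(S)$ nonincreasing, so the minimum is attained at $S = [n] \setminus \{i\}$; and since $\min_S[\cdot]^+ = [\min_S \cdot]^+$ (trivial case analysis on whether the inner min is negative), this yields the second formula $\delta_i = [\tilde{f}([n]) - \tilde{f}([n] \setminus \{i\})]^+$.

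Finally, the first description follows from two applications of the same max-weight identity. Directly, $\max_{y \in \tilde{P}} \one^t y = \tilde{f}([n])$. Moreover, $\max_{y \in \tilde{P}} \one^t_{-i} y_{-i}$ is unchanged by imposing $y_i = 0$ (this only slackens constraints whose index set contains $i$), so it equals $\max_{y_{-i} \in \tilde{P}^i(0)} y_{-i}([n] \setminus \{i\}) = g([n] \setminus \{i\}) = \min(\tilde{f}([n]), \tilde{f}([n] \setminus \{i\}))$. Subtracting gives $[\tilde{f}([n]) - \tilde{f}([n] \setminus \{i\})]^+$, matching the second formula. The main obstacle is technical: the max-weight identity on polymatroids is classically stated for monotone submodular rank functions, whereas Lemma \ref{lemma:polymatroid_box} only produces a submodular $\tilde{f}$. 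I would handle this by passing to the monotonization $\hat{f}(S) := \min_{T \supseteq S} \tilde{f}(T)$, which is monotone submodular, defines the same polytope (non-monotone $\tilde{f}$-constraints are subsumed by $\hat{f}$-constraints once $y \geq 0$ is used), and satisfies $\hat{f}([n]) = \tilde{f}([n])$; the standard greedy argument applies verbatim to $\hat{f}$ and transfers back to $\tilde{f}$.
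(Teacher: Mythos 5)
The paper does not actually prove this lemma: it is cited from \cite{goel12} as a known fact (``which was proved in \cite{goel12}''), so there is no in-paper proof to compare against. I will therefore evaluate your argument on its own.

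The overall strategy (reduce $\delta_i$ to a statement about $\tilde{f}$, use a polymatroid max-weight identity on the restricted polytope, then exploit submodularity to push the minimum to $S=[n]\setminus\{i\}$) is sound, and the reduction $\delta_i = \min_{S\subseteq[n]\setminus\{i\}}\bigl[\tilde{f}(S\cup\{i\}) - \max_{y_{-i}\in\tilde{P}^i(0)} y_{-i}(S)\bigr]$ is correct. The soft spot is the claim that $\tilde{P}^i(0)$ ``is itself a polymatroid on $[n]\setminus\{i\}$ with rank function $g(S)=\min(\tilde{f}(S),\tilde{f}(S\cup\{i\}))$'' and that the max-weight identity gives $\max_{y_{-i}\in\tilde{P}^i(0)} y_{-i}(S) = g(S)$. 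That identity requires $g$ to be a monotone submodular rank function, and $g$ need not be monotone here (e.g.\ if $\tilde{f}([n]) < \tilde{f}([n]\setminus\{i\})$ then $g([n]\setminus\{i\})$ can be smaller than $g$ on proper subsets). When $g$ is non-monotone, $\max_{y_{-i}} y_{-i}(S)$ equals the monotonization of $g$, not $g$ itself, so the identity as written is false. Interestingly, the monotonization of $g$ is exactly $\hat{f}(S)=\min_{T\supseteq S}\tilde{f}(T)$ restricted to $2^{[n]\setminus\{i\}}$, so your closing remark about passing to $\hat{f}$ is the right fix --- but it needs to be applied \emph{before} the $g$-step, not as an afterthought about $\tilde{P}$ alone. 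Concretely: replace $\tilde{f}$ by $\hat{f}$ from the outset; then the restriction of $P(\hat{f})$ to $y_i=0$ has rank function $\hat{f}|_{2^{[n]\setminus\{i\}}}$ (the would-be $\min(\hat{f}(S),\hat{f}(S\cup\{i\}))$ collapses to $\hat{f}(S)$ by monotonicity), the greedy identity applies legitimately, and you land on $\delta_i = [\hat{f}([n])-\hat{f}([n]\setminus\{i\})]^+$. One then needs the extra (easy) observation that this equals $[\tilde{f}([n])-\tilde{f}([n]\setminus\{i\})]^+$, using $\hat{f}([n])=\tilde{f}([n])$ and $\hat{f}([n]\setminus\{i\})=\min(\tilde{f}([n]\setminus\{i\}),\tilde{f}([n]))$. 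With that reorganization your proof is correct; as currently written, the step invoking the max-weight identity on $g$ is not justified.
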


We note that the $\tilde{f}$ function define in Lemma
\ref{lemma:polymatroid_box}
might not be monotone. The non-monotonicity has to do with the $-x(T)$ term in
the definition. It simple to see that if $x = 0$, then $\tilde{f}(\cdot)$ is
monotone, since for $S \subseteq S'$ : $\tilde{f}(S') = f(T) + d(S' \setminus
T) \geq f(T \cap S) + d(S \setminus T) \geq \tilde{f}(S)$, where $T$ is the
subset of $S'$ minimizing $f(T) + d(S' \setminus T)$. The following lemma will
allow us to define clinching in terms of a monotone submodular function:

\begin{lemma}\label{lemma:constructive_clinchign_2}
 Given a polymatroid $P$ defined by $f$, $x \in P$ and a demand vector $d \in
\R^n_+$, then: $$\max_{y \in P_{x,d}} \one^t y = \max_{y \in P_{0,x+d}} \one^t y
- \one^t x$$
\end{lemma}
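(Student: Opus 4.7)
The plan is to apply Lemma \ref{lemma:polymatroid_box} to both sides and reduce the identity to a routine manipulation of $\min$-expressions. Recall that for any polymatroid $Q$ with rank function $g$, one has $\max_{y \in Q} \one^t y = g([n])$; this will be the only extra fact beyond Lemma \ref{lemma:polymatroid_box} that I need.

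First I would apply Lemma \ref{lemma:polymatroid_box} directly to $P_{x,d}$: this gives that the LHS equals $\tilde{f}([n]) = \min_{T \subseteq [n]}[f(T) - x(T) + d([n] \setminus T)]$. Next I would apply the \emph{same} lemma to $P_{0, x+d}$, viewing it as the intersection of the polymatroid $P$ with the box $\{y \leq x+d\}$ centered at $0$. Here the $-x(T)$ term disappears (since the base allocation is $0$) and the demand vector becomes $x+d$, so the corresponding submodular function evaluates at $[n]$ to $\hat{f}([n]) = \min_{T \subseteq [n]}[f(T) + x([n] \setminus T) + d([n] \setminus T)]$.

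Now it remains to verify the arithmetic identity
\[
\min_{T}[f(T) + x([n]\setminus T) + d([n]\setminus T)] - x([n]) \;=\; \min_{T}[f(T) - x(T) + d([n]\setminus T)],
\]
which is immediate by pulling the constant $-x([n])$ inside the $\min$ and using $x([n]\setminus T) - x([n]) = -x(T)$. Combining the three displays yields the claim.

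The main obstacle is essentially cosmetic: one has to be careful that $\tilde{f}$ in Lemma \ref{lemma:polymatroid_box} really is the rank function of $P_{x,d}$ even when $\tilde{f}$ is non-monotone, so that $\max \one^t y$ over $P_{x,d}$ is still $\tilde{f}([n])$. This holds because $P_{x,d}$ is still a polymatroid (just over a possibly non-monotone submodular function, a so-called \emph{generalized polymatroid}) and the total weight is maximized at the ``top'' vertex whose coordinates sum to $\tilde{f}([n])$. Once this is acknowledged, the lemma reduces to the one-line algebraic identity above.
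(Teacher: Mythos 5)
Your proposal takes a genuinely different route from the paper, but it has a real gap at its pivotal step.

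Your algebra is fine: applying Lemma~\ref{lemma:polymatroid_box} to $P_{0,x+d}$ gives a monotone submodular function $\hat f(S) = \min_{T\subseteq S}[f(T) + (x+d)(S\setminus T)]$, so $\max_{y\in P_{0,x+d}}\one^t y = \hat f([n])$ is safe, and the identity $\hat f([n]) - x([n]) = \min_T[f(T)-x(T)+d([n]\setminus T)] = \tilde f([n])$ is a correct one-liner. The problem is the other side. You need $\max_{y\in P_{x,d}}\one^t y = \tilde f([n])$, and you wave at this as ``essentially cosmetic,'' justified by saying that $P_{x,d}$ is a generalized polymatroid with top vertex summing to $\tilde f([n])$. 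This reasoning is off on two counts. First, $\tilde f$ cannot be the rank function of $P_{x,d}$ when it is non-monotone (rank functions of polymatroids are always monotone), so the phrase ``$\tilde f$ really is the rank function'' is incoherent. Second, the pair $(0,\tilde f)$ with non-monotone $\tilde f$ is not a compliant pair in the g-polymatroid sense (compliance of $(0,\tilde f)$ is literally monotonicity of $\tilde f$), so the g-polymatroid ``top vertex'' fact you appeal to is not directly applicable. The polytope $\{y\geq 0:\, y(S)\leq\tilde f(S)\}$ \emph{is} a polymatroid, but with a different (monotone) rank function $r$, and a priori one only gets $r([n])\leq\tilde f([n])$; equality is exactly the nontrivial content.

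The assertion is in fact true, but proving it requires the same uncrossing argument the paper deploys: one shows the base polytope $\{y:\,y(S)\leq\tilde f(S),\,y([n])=\tilde f([n])\}$ meets $\R^n_+$, by taking a point $z$ in it minimizing $\sum_i (z_i)^-$ and uncrossing tight sets to derive a contradiction from any negative coordinate. That is, word for word, the argument in the paper's proof after the change of variables $z=x+y$. So your restructuring does not avoid the difficulty; it relocates it into an unproved claim. If you want to go this route, you should state and prove the lemma ``for nonnegative submodular $g$ with $g(\emptyset)=0$, $\max\{\one^t y : y\geq 0,\, y(S)\leq g(S)\ \forall S\} = g([n])$'' explicitly, at which point your proof becomes a legitimate alternative presentation of the same underlying argument.
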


\begin{proof}
 The problem $\max_{y \in P_{x,d}} \one^t y$ can be written as $\max \one^t y
\text{ s.t. } (x+y)(S) \leq f(S); \forall S; 0 \leq y \leq d$. Once we relax $0
\leq y$ to $-x \leq y$ and substitute $z = x+y$ we get the problem: $\max
\one^t (z-x) \text{ s.t. } z(S) \leq f(S); \forall S; 0 \leq z \leq x+d$. So,
it is simple to see that $\max_{y \in P_{0,x+d}} \one^t y - \one^t x$ is a
relaxation of the first problem. Now, among the \emph{optimal} solutions $z$ to
the first problem, take one minimizing $\Phi(z) = \sum_i (x_i - z_i)^+$, i.e.,
take the optimal solution that minimally violates $x \leq z$. We wish
to prove that $\Phi(z) = 0$.

Assume by contradiction that $\Phi(z) > 0$. If this is true, then $S^- = \{i;
z_i < x_i\} \neq \emptyset$. Since $\one^t z \geq \one^t x$ (after all $z=x$ is
feasible), then $S^+ = \{i; z_i > x_i \} \neq \emptyset$ as well. Now, notice
that for all $i \in S^-$, $j \in S^+$ and $\delta > 0$, the solution $(z_i +
\delta, z_j - \delta, z_{-i,j})$ can't be feasible, otherwise we would violate
the minimality of $\Phi$. Therefore, there must be a tight set $T_{ij}$ between
$i$ and $j$, i.e., $z(T_{ij}) = f(T_{ij})$, $i \in T_{ij}$, $j \notin T_{ij}$.
Using uncrossing (Lemma \ref{lemma:uncrossing}), the set $T = \cup_{i \in S^-}
\cap_{j \in S^+} T_{ij}$ we get a set that is tight, i.e., $z(T) = f(T)$, $S^-
\subseteq T$ and $T \cap S^+ = \emptyset$. So, $f(T) = z(T) < x(T)$, where the
second inequality comes from the fact that $S^- \subseteq T \subseteq [n]
\setminus S^+$. This contradicts the fact that $x(T) \leq f(T)$ because $x \in
P$.
\end{proof}

The previous lemma motivates the following notation for submodular functions
capped by a vector: given a vector $\psi \in \R^n_+$, define $f_\psi(S) =
\min_{T \subseteq S} f(T) + \psi(S \setminus T)$, which is the submodular
function defining $P_{0,\psi}$. Using this new notation together with the
previous lemmas, we get:

\begin{cor}
 The clinched amount can be calculated as: $\delta_i = f_{x+d}([n]) -
f_{x+(0,d_{-i})}([n])$.
\end{cor}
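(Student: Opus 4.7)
The plan is to start from the max-max expression in Lemma~\ref{lemma:constructive_clinchign_1},
\[
\delta_i \;=\; \Bigl(\max_{y \in P_{x,d}} \one^t y\Bigr) \;-\; \Bigl(\max_{y \in P_{x,d}} \one^t_{-i} y_{-i}\Bigr),
\]
and to rewrite each of the two inner maxima, via Lemma~\ref{lemma:constructive_clinchign_2}, in the form $f_\psi([n]) - x([n])$ for an appropriate capping vector $\psi$, so that the $x([n])$ terms cancel when we subtract.

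For the first maximum this is immediate: Lemma~\ref{lemma:constructive_clinchign_2} with demand vector $d$ gives $\max_{y \in P_{x,d}} \one^t y = \max_{y \in P_{0,x+d}} \one^t y - \one^t x$, and since $P_{0,x+d}$ is the polymatroid defined by the monotone submodular rank function $f_{x+d}$, its coordinate sum is maximized at the value of the rank on the ground set, namely $f_{x+d}([n])$. Hence the first maximum equals $f_{x+d}([n]) - x([n])$.

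For the second maximum, the intermediate step I would prove is
\[
\max_{y \in P_{x,d}} \sum_{j \neq i} y_j \;=\; \max_{y \in P_{x,(0,d_{-i})}} \one^t y.
\]
The direction $(\geq)$ is immediate because $P_{x,(0,d_{-i})} \subseteq P_{x,d}$ and every element of the smaller set has $y_i = 0$. For $(\leq)$, I would argue that $P_{x,d}$ is downward closed in the coordinate $y_i$: decreasing $y_i$ only weakens each constraint $(x+y)(S) \leq f(S)$ while preserving $0 \leq y \leq d$. Hence any maximizer of $\sum_{j \neq i} y_j$ over $P_{x,d}$ can be replaced by one with $y_i = 0$ without changing the objective, and such a point lies in $P_{x,(0,d_{-i})}$ with full coordinate sum equal to $\sum_{j \neq i} y_j$. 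Applying Lemma~\ref{lemma:constructive_clinchign_2} to this reformulated maximum, with capping vector $(0,d_{-i})$ in place of $d$, then rewrites it as $f_{x+(0,d_{-i})}([n]) - x([n])$.

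Subtracting the two expressions yields $\delta_i = f_{x+d}([n]) - f_{x+(0,d_{-i})}([n])$, as claimed. I expect the $y_i \mapsto 0$ reduction in the second step to be the only mildly delicate point, but it follows directly from the definition of $P_{x,d}$ rather than from any deeper polymatroid exchange property; once it is in hand, both applications of Lemma~\ref{lemma:constructive_clinchign_2} are black boxes.
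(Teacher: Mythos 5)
Your proof is correct and follows exactly the paper's argument: rewrite $\max_{y \in P_{x,d}} \one^t_{-i} y_{-i}$ as $\max_{y \in P_{x,(0,d_{-i})}} \one^t y$, then apply Lemma~\ref{lemma:constructive_clinchign_2} to both maxima so that the $\one^t x$ terms cancel. The only difference is that you spell out the downward-closure justification for the $y_i \mapsto 0$ step, which the paper leaves implicit behind a "Notice that."
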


\begin{proof}
 Notice that $\max_{y \in P_{x,d}} \one^t_{-i} y_{-i} = \max_{y \in
P_{x,(0,d_{-i})}} \one^t y = f_{x+(0,d_{-i})}([n]) - \one^t x$
\end{proof}

\subsection{Saturation and a proof of the Structure of Tight Sets Lemma}
\label{sec:saturation}

This sets the stage to the definition of \emph{saturation}, which will be
fundamental concept in the following proofs. First we give an intuitive notion
of saturation and then we give a more practical equivalent definition using the
$f_\psi$ notation.

\begin{defn}[Saturation]
 Given $x,d$ in a certain point of the execution of the Polyhedral Clinching
Auction (Algorithm \ref{polyhedral-clinching-auction}) we say that agent $i$ is
saturated if there is an optimal solution to $\max_{z \in P_{0,x+d}} \one^t z$
with $z_i < x_i + d_i$. We say that $i$ is unsaturated if all optimal solutions
are such that $z_i = x_i + d_i$.

Also, for a fixed agent $k$, we say that agent $i$ is $k$-saturated if, for
$\psi$ such that $\psi_{-k} = x_{-k} + d_{-k}$ and $\psi_k = x_k$, there
are optimal solutions to $\max_{z \in P_{0,\psi}} \one^t z$ with $z_i < 
\psi_i$. We say that $i$ is $k$-unsaturated if all optimal solution are such
that $z_i = \psi_i$. 
\end{defn}

{\noindent \textbf{Intuition of the concept of saturation and a
connection to previous work:} For the special case of multi-unit auctions $P =
\{x; \sum_i x_i \leq 1\}$ studied in \cite{dobzinski12, Bhattacharya10, goel13},
an important concept is that of the clinching set -- which is the set of players
who clinch some amount of the good as the price increases. For generic
polymatroidal setting, the concept of saturation emulates the concept of the
clinching set in the following sense:
player $i$ will be able to clinch as the demand of $k$ drops iff $k$ is
$i$-unsaturated. For multi-unit auctions, the set of $k$-unsaturated elements
are either $\emptyset$ or $[n]$. So, one can represent this structure by
defining the \emph{clinching set} as the set of players $k$ for which all $[n]$
are $k$-unsaturated.  \comment{This connection is formalized in the proof of
Lemma \ref{lemma:structure_tight_sets}.}\\}

The next lemma gives a more practical definition of saturation:

\begin{lemma}
 Let $\psi$ be such that $\psi_{-k} = x_{-k} + d_{-k}$ and $\psi_k = x_k$. Then
agent $i$ is $k$-unsaturated iff $\psi_i = f_\psi([n]) - f_\psi([n]\setminus
i)$.
\end{lemma}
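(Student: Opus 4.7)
The plan is to unpack the definitions and reduce the claim to a clean statement about the minimization defining $f_\psi$. By Lemma~\ref{lemma:polymatroid_box}, $P_{0,\psi}$ is the polymatroid of the monotone submodular function $f_\psi$, so $\max_{z \in P_{0,\psi}} \one^t z = f_\psi([n])$, and agent $i$ is $k$-unsaturated precisely when every optimizer $z$ satisfies $z_i = \psi_i$. First I would reformulate this by a simple perturbation argument: $i$ is $k$-unsaturated iff, for every $\gamma \in (0,\psi_i]$, replacing $\psi_i$ by $\psi_i - \gamma$ (call the resulting vector $\psi^\gamma$) yields $f_{\psi^\gamma}([n]) < f_\psi([n])$. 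The ``only if'' direction is immediate from the definition; conversely, if equality holds for some $\gamma>0$, then a maximizer of the shrunken problem is an element of $P_{0,\psi}$ with $z_i \le \psi_i - \gamma < \psi_i$ that still achieves $f_\psi([n])$, witnessing $k$-saturation.

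Next I would expand $f_\psi([n]) = \min_{T \subseteq [n]}\bigl[f(T) + \psi([n]\setminus T)\bigr]$ by splitting on whether $i \in T$. Let $A := \min_{T \ni i}\bigl[f(T) + \psi([n]\setminus T)\bigr]$; this quantity is independent of $\psi_i$ because $i \notin [n]\setminus T$. For the remaining terms with $T \not\ni i$, writing $[n]\setminus T = \bigl(([n]\setminus i)\setminus T\bigr) \cup \{i\}$ gives
\[
\min_{T \subseteq [n]\setminus i}\bigl[f(T) + \psi(([n]\setminus i)\setminus T)\bigr] + \psi_i \;=\; f_\psi([n]\setminus i) + \psi_i.
\]
Combining the two cases,
\[
f_\psi([n]) \;=\; \min\bigl(A,\; f_\psi([n]\setminus i) + \psi_i\bigr).
\]

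Finally I would read off the equivalence. Decreasing $\psi_i$ by $\gamma$ changes only the second term inside the min, so $f_{\psi^\gamma}([n]) = \min(A,\, f_\psi([n]\setminus i) + \psi_i - \gamma)$. This is strictly less than $f_\psi([n])$ for every $\gamma \in (0,\psi_i]$ iff the original minimum is attained by the second term, i.e., $f_\psi([n]\setminus i) + \psi_i \le A$, which is precisely $\psi_i = f_\psi([n]) - f_\psi([n]\setminus i)$. Conversely, if $A < f_\psi([n]\setminus i) + \psi_i$ then for all sufficiently small $\gamma > 0$ the minimum is still $A$, so $f_{\psi^\gamma}([n]) = f_\psi([n])$ and $i$ is $k$-saturated. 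The main obstacle I anticipate is justifying the reformulation in the first step, i.e., extracting from the equality $f_{\psi^\gamma}([n]) = f_\psi([n])$ an honest optimizer of the original problem with $z_i < \psi_i$; this is immediate once one notes $P_{0,\psi^\gamma} \subseteq P_{0,\psi}$, and the rest of the argument is a direct computation with the submodular function $f_\psi$.
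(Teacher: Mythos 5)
Your proof is correct, but it takes a genuinely different route from the paper's. The paper argues via the greedy characterization of polymatroid vertices: among optimal solutions to $\max_{z \in P_{0,\psi}} \one^t z$, the one that minimizes $z_i$ is obtained by greedily filling all coordinates in $[n]\setminus i$ first, yielding $z_i = f_\psi([n]) - f_\psi([n]\setminus i)$; the lemma then follows by comparing this minimum to $\psi_i$. You instead avoid the greedy argument entirely: you recast $k$-unsaturation as the condition that any strict decrease of $\psi_i$ strictly decreases the optimal value $f_{\psi^\gamma}([n])$, and then you resolve this by splitting the minimization defining $f_\psi([n])$ on whether $i \in T$, arriving at $f_\psi([n]) = \min\bigl(A,\, f_\psi([n]\setminus i) + \psi_i\bigr)$ with $A$ independent of $\psi_i$. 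Your decomposition is in effect a special case of the paper's Lemma~\ref{lemma:auxiliary}, which the paper proves separately and uses for other purposes; re-deriving it here makes your argument self-contained. The tradeoff is that the paper's route is a one-line appeal to a standard fact about polymatroids, while yours is a slightly longer but more elementary computation that needs no appeal to greedy optimality of linear programs over polymatroids. The perturbation reformulation is clean and correctly justified via the inclusion $P_{0,\psi^\gamma} \subseteq P_{0,\psi}$; no gaps.
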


\begin{proof}
 The optimal solution of $\max_{z \in P_{0,\psi}} \one^t z$ that minimizes
$z_i$ is the one that maximizes $\one^t_{-i} z_{-i}$. Since the feasible set is
a polymatroid, one can simply greedly increase each coordinate as much as one
can, leaving $i$ as the last one. Therefore we get $z_i =  f_\psi([n]) -
f_\psi([n]\setminus i)$. Now, if $z_i = \psi_i$, then $i$ is $k$-unsaturated,
if $z_i < \psi_i$, then $i$ is $k$-saturated.
\end{proof}

Now, we state and prove two useful lemmas on dealing with submodular functions
capped by a vector:

\begin{lemma}\label{lemma:auxiliary}
 Given $\psi$ and $\psi' = (\psi'_i, \psi_{-i})$ with $\psi'_i < \psi_i$ and $i
\in S$, the following identity holds: $f_{\psi'}(S) = \min \{ f_\psi(S),
f_\psi(S \setminus i) + \psi'_i \}$.
\end{lemma}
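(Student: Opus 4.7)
The plan is to just unfold the definition $f_\psi(S)=\min_{T\subseteq S} f(T)+\psi(S\setminus T)$ and split the minimization over $T$ according to whether $T$ contains $i$ or not. The crucial observation is that $\psi$ and $\psi'$ differ only at coordinate $i$, so for any $T\subseteq S$ we have
\[
  \psi'(S\setminus T) \;=\;
  \begin{cases}
    \psi(S\setminus T) & \text{if } i\in T,\\[2pt]
    \psi'_i + \psi\bigl((S\setminus\{i\})\setminus T\bigr) & \text{if } i\notin T,
  \end{cases}
\]
because in the first case $i\notin S\setminus T$ so the difference coordinate never enters the sum, and in the second case $i\in S\setminus T$ contributes exactly $\psi'_i$ while the remaining indices agree with $\psi$.

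For the lower bound $f_{\psi'}(S)\geq \min\{f_\psi(S),\,f_\psi(S\setminus i)+\psi'_i\}$, I would let $T^\star\subseteq S$ be the minimizer for $f_{\psi'}(S)$. If $i\in T^\star$, the identity above gives $f(T^\star)+\psi'(S\setminus T^\star)=f(T^\star)+\psi(S\setminus T^\star)\geq f_\psi(S)$. If $i\notin T^\star$, then $T^\star\subseteq S\setminus i$, and the identity yields $f(T^\star)+\psi'(S\setminus T^\star)=\psi'_i+\bigl[f(T^\star)+\psi((S\setminus i)\setminus T^\star)\bigr]\geq \psi'_i+f_\psi(S\setminus i)$.

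For the upper bound $f_{\psi'}(S)\leq \min\{f_\psi(S),\,f_\psi(S\setminus i)+\psi'_i\}$, the first inequality follows immediately from $\psi'\leq \psi$ coordinatewise (hence $\psi'(S\setminus T)\leq \psi(S\setminus T)$ for every $T$). For the second, take a minimizer $T^\star\subseteq S\setminus i$ of $f_\psi(S\setminus i)$; using this same $T^\star$ as a candidate in the minimization defining $f_{\psi'}(S)$ and applying the $i\notin T$ case of the identity gives $f_{\psi'}(S)\leq f(T^\star)+\psi'_i+\psi((S\setminus i)\setminus T^\star)=\psi'_i+f_\psi(S\setminus i)$.

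There is no real obstacle here — the whole statement is essentially a bookkeeping exercise in how $\psi$ and $\psi'$ differ on the one coordinate $i$, combined with partitioning the feasible $T$'s by whether they contain $i$. The only thing to be a bit careful about is the case $i\in T$ versus $i\notin T$ when reading off which term in the minimum is attained, but this is exactly what the two-case identity above handles.
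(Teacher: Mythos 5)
Your proof is correct and takes essentially the same approach as the paper: unfold the definition of $f_{\psi'}(S)$ as a minimum over $T\subseteq S$ and split by whether $i\in T$. You are a bit more careful than the paper (which only inspects the minimizer for $f_{\psi'}(S)$ and leaves the reverse inequalities implicit), but the underlying argument is identical.
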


\begin{proof}
 By the definition of $f_{\psi'}$, there is a set $T \subseteq S$ such that
$f_{\psi'}(S) = f(T) + \psi'(S \setminus T)$. If $i \in T$, then $f_{\psi'}(S)
= f_{\psi}(S)$. If not, then: $f_{\psi'}(S) = \psi'_i + f(T) + \psi'(S
\setminus T,i) = \psi'_i + f(T) + \psi(S \setminus T,i) = \psi'_i + f_{\psi}(S
\setminus i)$ by the fact that $T$ minimizes $f(T) + \psi(S \setminus T,i)$ for
$T \subseteq S \setminus i$.
\end{proof}

\comment{
\begin{lemma}
 Given an allocation $x$ and demands $d$ in a certain point of the algorithm,
let $U$ be the set of $k$-unsaturated elements and $S$ be the
set of $k$-saturated elements. If $\psi = (x_k, x_{-k}+d_{-k}) \in \R^n_+$,
then: $f_\psi ([n]) = f(S) + \psi(U)$.
\end{lemma}

\begin{proof}
 By the definition of unsaturated elements, $i \in U$ if  $\psi_i = f_\psi([n])
- f_\psi([n]\setminus i)$. By submodularity, for every $i \in S \subseteq [n]$,
$\psi_i \geq f_\psi(S)-  f_\psi(S\setminus i) \geq \psi_i$, so: $f_\psi(S)- 
f_\psi(S\setminus i) = \psi_i$. So: Let $i_1, \hdots, i_k$ be the elements in
$U$, so:
$$f_\psi([n]) = f_\psi(S) + \sum_{j=1}^k f_\psi(S \cup \{i_1,\hdots, i_j\}) -
f_\psi(S \cup \{1_1, \hdots, i_{j-1}\}) = f_\psi(S) + \sum_{j=1}^k \psi_{i_j} =
f_\psi(S) + \psi(U)$$
Now we need to show that $f_\psi(S) = f(S)$. In fact, by the definition of
$f_\psi$, $f_\psi(S) = f(T) + \psi(S \setminus T)$. If $S \setminus T \neq
\emptyset$, then take $i \in S \setminus T$, so: $ \psi_i \geq f_\psi([n]) -
f_\psi([n] \setminus i)  \geq f_\psi(S) - f_\psi(S\setminus i) =  f(T) + \psi(S
\setminus T) - f_\psi(S\setminus
i) \geq f(T) + \psi(S \setminus T) - [ f(T) + \psi(S \setminus T,i) ] = \psi_i
$, so $i$ is $k$-unsaturated, contradicting the definition of $S$.
\end{proof}
}

\begin{lemma}\label{lemma:decomposition}
 Given an allocation $x$ and demands $d$ in a certain point of the algorithm,
let $U$ be the set of $k$-unsaturated elements and $S$ be the
set of $k$-saturated elements. If $\psi = (x_k, x_{-k}+d_{-k}) \in \R^n_+$ and
$S \subseteq X \subseteq [n]$, then: $f_\psi (X) = f(S) + \psi(X \cap
U)$.
\end{lemma}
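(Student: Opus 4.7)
The plan is to prove the identity in three steps: first, establish a ``peeling'' property for elements of $U$; second, apply it to reduce the claim to the base case $X = S$; third, prove $f_\psi(S) = f(S)$. The base case is the main obstacle; the first two steps are mostly bookkeeping.

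For step one, I will show that for every $i \in U$ and every $T \ni i$, the marginal $f_\psi(T) - f_\psi(T \setminus i)$ equals exactly $\psi_i$. The upper bound $f_\psi(T) \leq f_\psi(T \setminus i) + \psi_i$ follows by taking any minimizer $R \subseteq T \setminus i$ in the definition of $f_\psi(T \setminus i)$ and noting $R \subseteq T$, which yields $f_\psi(T) \leq f(R) + \psi(T \setminus R) = f_\psi(T \setminus i) + \psi_i$. The matching lower bound combines the characterization $\psi_i = f_\psi([n]) - f_\psi([n] \setminus i)$ of $i \in U$ (from the preceding lemma) with the submodularity of $f_\psi$ supplied by Lemma \ref{lemma:polymatroid_box}: diminishing marginals give $f_\psi(T) - f_\psi(T \setminus i) \geq f_\psi([n]) - f_\psi([n] \setminus i) = \psi_i$. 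Iterating this identity over the elements of $X \cap U$ (in any order) then reduces $f_\psi(X)$ to $f_\psi(X \setminus (X \cap U)) + \psi(X \cap U) = f_\psi(S) + \psi(X \cap U)$, using $S \subseteq X$ to rewrite $X \setminus (X \cap U) = X \cap S = S$.

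It remains to prove the base case $f_\psi(S) = f(S)$. The direction $f_\psi(S) \leq f(S)$ is immediate by taking $R = S$ in the minimization. For the reverse, let $R^* \subseteq S$ attain $f_\psi(S) = f(R^*) + \psi(S \setminus R^*)$. Applying the peeling identity to $X = [n]$ gives $f_\psi([n]) = f_\psi(S) + \psi(U) = f(R^*) + \psi([n] \setminus R^*)$, so $R^*$ is also a minimizer of $T \mapsto f(T) + \psi([n] \setminus T)$ over all $T \subseteq [n]$. Suppose for contradiction that $R^* \subsetneq S$, and pick $i \in S \setminus R^*$. Since $R^* \subseteq [n] \setminus i$, evaluating the minimum defining $f_\psi([n] \setminus i)$ at the candidate $R = R^*$ yields $f_\psi([n] \setminus i) \leq f(R^*) + \psi(([n] \setminus i) \setminus R^*) = f_\psi([n]) - \psi_i$, so $f_\psi([n]) - f_\psi([n] \setminus i) \geq \psi_i$. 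Combined with the general upper bound $\leq \psi_i$ from step one, this forces $f_\psi([n]) - f_\psi([n] \setminus i) = \psi_i$, which by the characterization of unsaturation places $i \in U$ — contradicting $i \in S$. Therefore $R^* = S$ and $f_\psi(S) = f(S)$, completing the proof.
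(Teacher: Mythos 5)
Your proof is correct and follows essentially the same strategy as the paper's: establish the marginal identity $f_\psi(T)-f_\psi(T\setminus i)=\psi_i$ for $i\in U$, peel $X\cap U$ down to $S$, and then show $f_\psi(S)=f(S)$ by deriving a contradiction (an element of $S\setminus R^*$ would be $k$-unsaturated). The only difference is a small streamlining in the base case: you observe directly that the minimizer $R^*$ for $f_\psi(S)$ is also a minimizer for $f_\psi([n])$ and evaluate $f_\psi([n]\setminus i)$ at $R^*$, whereas the paper first derives $f_\psi(S)-f_\psi(S\setminus i)=\psi_i$ and then lifts that equality from $S$ to $[n]$ via the peeling identity — your route is a touch cleaner but mathematically equivalent.
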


\begin{proof}
 By the definition of unsaturated elements, $i \in U$ if  $\psi_i = f_\psi([n])
- f_\psi([n]\setminus i)$.  By submodularity, for every $i \in X \subseteq [n]$,
$\psi_i \geq f_\psi(X)-  f_\psi(X\setminus i) \geq \psi_i$, so: $f_\psi(X)- 
f_\psi(X\setminus i) = \psi_i$. So: Let $i_1, \hdots, i_k$ be the elements in
$X \cap U$, so:
$$\begin{aligned}f_\psi(X) & = f_\psi(S) + \sum_{j=1}^k f_\psi(S
\cup \{i_1,\hdots, i_j\}) -
f_\psi(S \cup \{1_1, \hdots, i_{j-1}\}) =\\ & = f_\psi(S) +
\sum_{j=1}^k \psi_{i_j} =
f_\psi(S) + \psi(X \cap U)\end{aligned}$$
Now we need to show that $f_\psi(S) = f(S)$. In fact, by the
definition of
$f_\psi$, $f_\psi(S) = f(T) + \psi(S \setminus T)$. If $S \setminus T \neq
\emptyset$, then take $i \in S \setminus T$, so: $$\psi_i \geq f_\psi(S) -
f_\psi(S\setminus i) =
 f(T) + \psi(S\setminus T) - f_\psi( S\setminus
i) \geq f(T) + \psi( S \setminus T) - [ f(T) + \psi( S \setminus
(T \cup i)) ] = \psi_i$$
Using the identity we just proved, we have:
$$\psi_i \geq f_\psi([n]) - f_\psi([n]\setminus i) \geq [ f_\psi(S) + \psi([n]
\setminus S) ] - [f_\psi(S \setminus i) + \psi([n] \setminus S)] = \psi_i $$
which leads to the conclusion that $i$ is $k$-unsaturated, contradicting the
definition of $S$. Therefore it must be the case that $S \setminus T =
\emptyset$, i.e., $f_\psi(S) = f(S)$.
\end{proof}

Now, we state a sequence of three invariants that are mantained throughout the
auction. The first two of them are proved in \cite{goel12}, while the third one
has to do with the concept of saturation introduced in this paper. The proofs
can be found in Appendix \ref{appendix:proofs}.

\begin{lemma}[Invariant I: Maximality of clinching]\label{lemma:maximality}
Performing the clinching step twice in a row without updating prices
will result in no amount clinched in the second step. Alternatively: in point
\cp of the execution of the algorithm: $f_{x+d}([n]) =
f_{x+(0,d_{-i})}([n])$ for all $i \in [n]$.
\end{lemma}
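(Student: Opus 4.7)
The plan is to establish the equivalent formulation at point \cp, namely $f_{x+d}([n]) = f_{x+(0,d_{-i})}([n])$ for every $i$, which says precisely that Lemma~\ref{lemma:constructive_clinchign_1} applied at this state yields zero clinched amount. Denote the pre-clinching state by $(\bar x, \bar d)$ and the vector of first-round clinched amounts by $\delta$. Since admissibility under $\alpha_i$ transfers additively when the price $p_i$ is held fixed (the substitution $z \mapsto z + \delta_i$ identifies the two demand-maximization problems), the recomputed demand satisfies $d = \bar d - \delta$; hence $x + d = \bar x + \bar d$, and $x + (0, d_{-i})$ agrees with $\bar x + (0, \bar d_{-i})$ except in coordinate $i$, where it exceeds it by exactly $\delta_i$. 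Writing $\psi^+ := \bar x + (0, \bar d_{-i}) + \delta_i e_i$, the goal reduces to
\[
f_{\psi^+}([n]) = f_{\bar x + \bar d}([n]).
\]

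The inequality $f_{\psi^+}([n]) \leq f_{\bar x + \bar d}([n])$ is immediate from $\psi^+ \leq \bar x + \bar d$ coordinate-wise together with the monotonicity of $f_\psi$ in $\psi$. For the reverse direction I exhibit an explicit feasible witness. Pick $y^0$ attaining $\max_{y \in P_{\bar x, \bar d}} \one^t_{-i} y_{-i}$; since the objective does not involve $y_i$ and $P_{\bar x,\bar d}$ is downward closed, one may take $y^0_i = 0$, and by Lemma~\ref{lemma:constructive_clinchign_2} this optimum equals $f_{\bar x + (0, \bar d_{-i})}([n]) - \one^t \bar x$. The defining equality of the clinched amount, $P^i_{\bar x, \bar d}(\delta_i) = P^i_{\bar x, \bar d}(0)$, then gives $y^0 + \delta_i e_i \in P_{\bar x, \bar d}$. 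Setting $z^* := \bar x + y^0 + \delta_i e_i$, one checks $z^* \in P$, $z^*_i = \bar x_i + \delta_i = \psi^+_i$, and $z^*_j = \bar x_j + y^0_j \leq \bar x_j + \bar d_j = \psi^+_j$ for $j \neq i$, so $z^*$ is feasible in the LP defining $f_{\psi^+}([n])$.

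Totaling, $\one^t z^* = \one^t \bar x + (f_{\bar x + (0, \bar d_{-i})}([n]) - \one^t \bar x) + \delta_i = f_{\bar x + (0, \bar d_{-i})}([n]) + \delta_i$, which by Lemma~\ref{lemma:constructive_clinchign_1} applied at $(\bar x, \bar d)$ equals $f_{\bar x + \bar d}([n])$. Hence $f_{\psi^+}([n]) \geq f_{\bar x + \bar d}([n])$, closing the equality and thus the lemma. The only non-bookkeeping step is the construction of $z^*$, and the essential leverage there is the definitional identity $P^i_{\bar x, \bar d}(\delta_i) = P^i_{\bar x, \bar d}(0)$, which is exactly what allows one to raise coordinate $i$ by $\delta_i$ on an allocation with $y_i = 0$ without leaving the polymatroid.
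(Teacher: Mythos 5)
Your proof is correct, and it takes a genuinely different route from the paper's. The paper presents this lemma as an invariant preserved across visits to point \cp (base case at the first visit plus a ``preservation'' step), with Lemma~\ref{lemma:auxiliary} as the engine: when $\delta_i > 0$, the cap-reduction identity $f_{x+(0,d'_{-i})}([n]) = \min\{f_{x+d'}([n]), f_{x+d'}([n]\setminus i) + x_i\}$ forces $x_i + \delta_i = f_{x+d'}([n]) - f_{x+d'}([n]\setminus i)$, and a second application of the identity at the raised cap $x_i+\delta_i$ collapses to $f_{x+d'}([n])$. You instead prove the statement as a \emph{single-step} fact, with no inductive hypothesis, by exhibiting an explicit feasibility witness $z^*$ in the LP defining $f_{\psi^+}([n])$; the leverage comes from the definitional property $P^i_{\bar x, \bar d}(\delta_i) = P^i_{\bar x, \bar d}(0)$ rather than from the algebra of $f_\psi$. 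Your framing exposes something worth noting: the paper's ``inductive step'' never actually invokes the inductive hypothesis (the formula $\delta_i = f_{x+d'}([n]) - f_{x+(0,d'_{-i})}([n])$ is just the corollary to Lemma~\ref{lemma:constructive_clinchign_1}, valid unconditionally), so the property really is a per-step fact, as you treat it. Both proofs implicitly or explicitly require the observation that the recomputed demand equals the pre-clinching demand minus the clinched amount at a fixed price clock; you state and justify this cleanly via the substitution $z \mapsto z + \delta_i$ and the bound $\delta_i \leq \bar d_i$, whereas the paper writes ``$d'$ is updated to $d' - \delta$'' without comment. The two approaches buy slightly different things: the paper's algebraic route is shorter once Lemma~\ref{lemma:auxiliary} is in hand and fits the template used for Invariants II and III; your constructive route is more self-contained, makes the geometric reason $\delta_i$ is maximal visible, and avoids the (vestigial) induction.
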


\begin{lemma}[Invariant II: All goods sold]\label{lemma:all_goods_sold}
 At any point of the execution, it is always possible to sell all the goods.
Alternatively: $f_{x+d}([n]) = f([n])$.
\end{lemma}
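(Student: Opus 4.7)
The plan is to prove the invariant $f_{x+d}([n]) = f([n])$ by induction over iterations of Algorithm \ref{polyhedral-clinching-auction}, with the hypothesis being that the equality holds at point \cp of every prior iteration (and at initialization, before any clinching). The base case is essentially trivial: initially $x = 0$ and each $d_i$ is the maximum $z$ with $\alpha_i(z) \geq 0$, which by the monotonicity of $\alpha_i$ and $\alpha_i(0) = 0$ may be taken at least as large as $f([n])$ (capping beyond this value cannot affect the algorithm, since no coordinate of an allocation in $P$ exceeds $f([n])$). Then for any $T \subsetneq [n]$, $f(T) + d([n] \setminus T) \geq f([n])$, so $T = [n]$ attains the minimum defining $f_{0+d}([n])$.

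For the inductive step I would treat the two updates within an iteration separately. First, I would check that the clinching update leaves the vector $x + d$ pointwise unchanged. After setting $x_i \leftarrow x_i + \delta_i$ and recomputing $d_i$ to $d'_i$, the choice $z = d_i - \delta_i$ reproduces the previously admissible point $(x_i + d_i, \pi_i + p_i d_i)$, so $d'_i \geq d_i - \delta_i$; any strictly larger $d'_i$ would correspond to an admissible $z > d_i$ at the pre-clinching pair $(x_i,\pi_i)$, contradicting the maximality of $d_i$. Hence $d'_i = d_i - \delta_i$ and the coordinate sum $x_i + d_i$ is preserved, so $f_{x+d}([n])$ is unchanged.

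For the price update $p_{\hat i} \leftarrow p_{\hat i} + \epsilon$, only $d_{\hat i}$ is recomputed at the top of the next iteration, and it can only weakly decrease to some $d'_{\hat i} \in [0, d_{\hat i}]$ because raising the price shrinks the set $\{z : (x_{\hat i} + z, \pi_{\hat i} + p_{\hat i} z) \in \A_{\hat i}\}$. I would then invoke Lemma \ref{lemma:maximality} (Invariant I), which at point \cp of the previous iteration yields $f_{x + (0, d_{-\hat i})}([n]) = f_{x+d}([n]) = f([n])$. Viewing $f_\psi([n]) = \min_T f(T) + \psi([n] \setminus T)$ as a function of the $\hat i$-th coordinate of $\psi$, this function is monotone non-decreasing, since each term indexed by $T$ is either constant in $\psi_{\hat i}$ (if $\hat i \in T$) or increasing in it (if $\hat i \notin T$). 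Because it equals $f([n])$ at the two endpoints $\psi_{\hat i} \in \{x_{\hat i}, x_{\hat i} + d_{\hat i}\}$, it equals $f([n])$ at the intermediate value $x_{\hat i} + d'_{\hat i}$ as well, which establishes the invariant at the start of the next iteration.

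The main obstacle is recognizing that the right way to propagate the invariant across a price increment is to combine Invariant I with the monotonicity of $f_\psi([n])$ in $\psi$, squeezing the new value between two equal extremes — rather than attempting to track how $f_{x+d}([n])$ changes under a drop in $d_{\hat i}$ directly. Once this sandwich is identified, the remaining pieces (preservation of $x + d$ by clinching and monotonicity of $f_\psi([n])$) are routine verifications.
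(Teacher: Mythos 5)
Your proof is correct, and it takes a route that is recognizably different in its key step from the paper's. The paper also argues by induction and also leans on Lemma~\ref{lemma:maximality}, but where you split the iteration into a clinching step (showing $x+d$ is coordinatewise preserved, so $f_{x+d}([n])$ is unchanged) and a price step (handled by the monotonicity of $\psi \mapsto f_\psi([n])$ in $\psi_{\hi}$ plus a squeeze between the two endpoints $\psi_{\hi} \in \{x_{\hi}, x_{\hi}+d_{\hi}\}$), the paper instead identifies a single quantity that is \emph{exactly} invariant from one visit to point \cp{} to the next, namely $x+(0,d_{-\hi})$, and then applies Lemma~\ref{lemma:maximality} at \emph{both} the old and new \cp{} to pass from $f_{\tilde x + \tilde d}([n])$ to $f_{\tilde x + (0,\tilde d_{-\hi})}([n]) = f_{x+(0,d_{-\hi})}([n]) = f_{x+d}([n]) = f([n])$. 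The paper's argument is shorter once the right invariant vector is spotted (though it tacitly uses that $\delta_{\hi}=0$ so $x_{\hi}$ itself is preserved, which you would need to justify as you do with $d' \leq d$); your monotonicity-and-squeeze argument avoids invoking Invariant~I a second time at the new \cp, and also avoids having to argue $\delta_{\hi}=0$ separately, which makes it a bit more self-contained at the cost of one extra (but entirely routine) observation about $f_\psi([n])$. Both are sound.
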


\begin{lemma}[Invariant III: Self-unsaturation]\label{lemma:self-saturation}
 Through the execution of the clinching auction, player $i$ is $i$-unsaturated
for every $i$. Alternatively: $f_{x+(0,d_{-i})}(S) = f_{x+(0,d_{-i})}(S
\setminus i) + x_i$ for all $i \in [n]$ and $S \ni i$.
\end{lemma}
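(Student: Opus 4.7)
The plan is induction on iterations of Algorithm~\ref{polyhedral-clinching-auction}, treating Invariants~I and~II (Lemmas~\ref{lemma:maximality} and~\ref{lemma:all_goods_sold}) as already established. Fix $i$ and write $\psi := x + (0, d_{-i})$, so $\psi_i = x_i$ and $\psi_j = x_j + d_j$ for $j\ne i$. Submodularity of $f_\psi$ together with $f_\psi(\{i\}) = x_i$ (since $x \in P$ gives $x_i \le f(\{i\})$) reduces the claimed identity $f_\psi(S) - f_\psi(S\setminus i) = x_i$ (for all $S \ni i$), via diminishing marginals, to the single instance $S = [n]$; so it suffices to prove that.

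\textbf{Base case.} At the first checkpoint, the clinching formula starting from $x = 0$ with unbounded demands gives $x_i = f([n]) - f([n] \setminus i)$, and the recomputed demands are still effectively unbounded. Direct expansion yields $f_\psi(S) = \min\{f(S),\ f(S \setminus i) + x_i\}$ and $f_\psi(S \setminus i) = f(S \setminus i)$ for $S \ni i$; diminishing marginals $f(S) - f(S \setminus i) \ge f([n]) - f([n] \setminus i) = x_i$ close the case.

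\textbf{Inductive step.} I decompose the evolution between checkpoints $t$ and $t+1$ into two stages. \emph{Stage~1} (price update and first demand recomputation) changes only $d_{\hi}$. If $\hi = i$ then $\psi$ is unchanged (it ignores $d_i$). If $\hi \ne i$ then only $\psi_{\hi}$ drops; for $S \ni i$ containing $\hi$, Lemma~\ref{lemma:auxiliary} writes $f_\psi^{\mathrm{A}}(S) = \min\{f_\psi^t(S),\ f_\psi^t(S \setminus \hi) + \psi_{\hi}^{\mathrm{A}}\}$ and similarly for $S \setminus i$ (both $S$ and $S \setminus \hi$ contain $i$). Substituting the inductive hypothesis into each branch and simplifying preserves $f_\psi^{\mathrm{A}}([n]) - f_\psi^{\mathrm{A}}([n] \setminus i) = x_i$. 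Denote the intermediate state by~A.

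\emph{Stage~2} (clinching followed by the second demand recomputation) is the heart of the argument. The recomputation gives $d^{t+1} = d^{\mathrm{A}} - \delta$, so $x + d$ is pointwise preserved, $\psi_j = x_j + d_j$ is unchanged for $j \ne i$, and $\psi_i = x_i$ rises by $\delta_i$. Since $f_\psi([n] \setminus i)$ depends only on entries of $\psi$ on $[n] \setminus i$, it is unchanged across Stage~2. For $f_\psi([n])$, the constructive clinching formula (Lemma~\ref{lemma:constructive_clinchign_1}) gives $\delta_i = f_{x+d}([n]) - f_\psi([n])$ at state~A, and $f_{x+d}^{\mathrm{A}}([n]) = f([n])$ because $x+d$ at A equals $x+d$ at checkpoint $t+1$ (where Invariant~II holds); hence $f_\psi^{\mathrm{A}}([n]) = f([n]) - \delta_i$. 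At checkpoint $t+1$, Invariants~I and~II together force $f_\psi^{t+1}([n]) = f([n])$, an increase of exactly $\delta_i$. Combining this with self-unsaturation at A,
\[
f_\psi^{t+1}([n]) - f_\psi^{t+1}([n] \setminus i) = f([n]) - \bigl(f_\psi^{\mathrm{A}}([n]) - x^t_i\bigr) = \delta_i + x^t_i = x^{t+1}_i.
\]
The main obstacle is the Stage~2 bookkeeping: the identity $d^{t+1} = d^{\mathrm{A}} - \delta$ exactly cancels the $\delta$-shift in $x$ on coordinates $j \ne i$ (pinning $\psi_{-i}$), while the constructive clinching formula forces $f_\psi([n])$ to rise by the very same $\delta_i$ by which $x_i$ rises. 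Stage~1 is then a routine application of Lemma~\ref{lemma:auxiliary} together with the inductive hypothesis.
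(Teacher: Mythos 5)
Your proof is correct and follows essentially the same plan as the paper's: reduce to $S=[n]$ via submodularity, then induct over iterations, handling the price update via Lemma~\ref{lemma:auxiliary} and the clinching step separately. Two small differences are worth flagging. For Stage~1 the paper performs a four-case analysis (with one case ruled out by submodularity) using only the $S=[n]$ form of the inductive hypothesis, whereas you invoke the extended form at $[n]$ and $[n]\setminus\hi$ so that the common $x_i$ term factors out of both minima and the case split disappears; your route is legitimate, since the reduction to $S=[n]$ re-derives the extended form at each checkpoint. For Stage~2 your treatment is actually \emph{more} complete than the paper's: the paper dismisses clinching with ``$x+d$ is constant, therefore clinching doesn't affect the invariant,'' but since both $\psi_i=x_i$ and the target $x_i$ rise by $\delta_i$ this is not immediate. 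Your argument --- that $f_\psi([n]\setminus i)$ is pinned because $\psi_{-i}$ is unchanged, while $f_\psi([n])$ must rise by exactly $\delta_i$ by combining the constructive clinching formula with Invariants~I and~II at the next checkpoint --- is exactly what is needed to justify that terse claim. (Invariant~II is a convenience here; Invariant~I together with the preservation of $x+d$ already forces $f_\psi^{t+1}([n]) = f_{x+d}^{\mathrm{A}}([n]) = f_\psi^{\mathrm{A}}([n]) + \delta_i$.) So this is the paper's approach, but with the genuinely delicate step spelled out rather than asserted.
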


Now, we are ready to prove Lemma \ref{lemma:dropping_prices}:

\begin{proofof}{Lemma \ref{lemma:dropping_prices}}
Let $x,d$ be the allocation and demands of players at point \cp of the
execution of Algorithm \ref{polyhedral-clinching-auction}. If before the time
the algorithm is in point \cp some player drops his demand to zero, we
will show that either this player is $\hi$ (the player for which the price
increased) or $d_\hi$ became zero.

If $d_\hi$ was zero already, nothing changes when the price $p_\hi$ increases,
so no clinching happens. On the other hand, if after $p_\hi$ increases and
demands are updated, the new demand of $\hi$ (let's call it $d'_\hi$) is still
positive, we will see that $\delta_i < d_i$ for all players with $i \neq 
\hi$ and $d_i > 0$.

Let's call $d' = (d'_\hi, d_{-\hi})$. So, $\delta_i = f_{x+d'}([n]) -
f_{x+(0,d'_{-i})}([n])$. First, notice that: $f_{x+d'}([n]) \geq
f_{x+(0,d_{-\hi})}([n]) = f_{x+d}([n]) \geq f_{x+d'}([n]) $, where the
inequalities come from $d' \geq (0,d_{-\hi})$ and $d \geq d'$. The equality
comes from  Lemma \ref{lemma:maximality}. Therefore: $f_{x+d'}([n]) =
f_{x+d}([n])$.

Also, by Lemma \ref{lemma:auxiliary}, $f_{x+(0,d'_{-i})}([n]) = \min \{
f_{x+(0,d_{-i})}([n]), f_{x+(0,d_{-i})}([n] \setminus \hi) + x_\hi + d'_\hi \}$
If $f_{x+(0,d'_{-i})}([n]) = f_{x+(0,d'_{-i})}([n])$, then: 
$\delta_i = f_{x+d}([n]) - f_{x+(0,d_{-i})}([n])  = 0 < d_i$ by Lemma
\ref{lemma:maximality}. If $f_{x+(0,d'_{-i})}([n]) = f_{x+(0,d_{-i})}([n]
\setminus \hi) + x_\hi + d'_\hi $, then:
$$\begin{aligned}
   \delta_i & = f_{x+d}([n]) - [f_{x+(0,d_{-i})}([n]
\setminus \hi) + x_\hi + d'_\hi] 
\stackrel{\text{\ref{lemma:self-saturation}}}{=} f_{x+d}([n]) -
[f_{x+(0,d_{-i})}([n]
\setminus \hi,i) + x_i + x_\hi + d'_\hi]
\stackrel{\text{\ref{lemma:maximality}}}{=} \\
& = f_{x+(0,d_{-\hi})}([n]) -
[f_{x+(0,d_{-\hi})}([n]
\setminus \hi,i) + x_i + x_\hi + d'_\hi] = \\ & = f_{x+(0,d_{-\hi})}([n]) -
f_{x+(0,d_{-\hi})}([n] \setminus \hi) + f_{x+(0,d_{-\hi})}([n] \setminus \hi) -
f_{x+(0,d_{-\hi})}([n] \setminus \hi,i) - [x_i + x_\hi + d'_\hi] 
\stackrel{\text{\ref{lemma:self-saturation}}}{\leq} \\ & \leq
x_\hi + x_i + d_i -[x_i + x_\hi + d'_\hi] = d_i - d'_\hi < d_i
  \end{aligned}
$$

\end{proofof}

\begin{proofof}{Lemma \ref{lemma:structure_tight_sets}}
Let $x,d$ be the allocation and demands during the execution of the
algorithm in point \cp. We will show if $T = \{i; d_i > 0\}$, then
$x([n] \setminus T) = f([n]) - f(T)$. If we show that, then we are done, since
for the players in $[n]\setminus T$, the value of $x$ coincides with the final
allocation, so, if $x^f$ is the final allocation, we know that $x^f([n]
\setminus T) = x([n] \setminus T)$. By Lemma \ref{lemma:all_goods_sold},
$x^f([n]) = f([n])$, so: $x^f(T) = f(T)$.

In the first iteration $T = [n]$, so the property $x([n]
\setminus T) = f([n]) - f(T)$ holds trivially. Now we only need to show
that this property is preserved from one iteration to another. Let $x,d$ be
the allocation and demands when the algorithm is at point \cp. Also, let
$\hi$ be the player whose price increases just after point \cp. If
$d_\hi$ doesn't drop to zero, no other player's demand drops to zero (Lemma
\ref{lemma:dropping_prices}), so $T$ remains unchanged and the property is
preserved. 

If on the other hand, $d_\hi$ becomes zero, we claim that the players whose
demand drops to zero are exactly the $\hi$-unsaturated players. Letting $d' =
(0, d_{-\hi})$ and then using Lemma \ref{lemma:auxiliary} we can see that:
$$\begin{aligned}\delta_i = f_{x+d'}([n])-f_{x+(0,d'_i)}([n]) & = f_{x+d'}([n])
- \min\{
f_{x+d'}([n]), f_{x+d'}([n]\setminus i) + x_i \} \\ & = [f_{x+d'}([n]) -
f_{x+d'}([n]\setminus i) - x_i]^+\end{aligned}$$
 Therefore $\delta_i = d_i$ iff $f_{x+d'}([n]) -
f_{x+d'}([n]\setminus i) = x_i + d_i$, i.e., $i$ is $\hi$-unsaturated.
Let $U$ and $S$
be respectively the sets of $\hi$-unsaturated in $T$ and $\hi$-saturated
players in $T$. Notice that the players in $U$ will be ones who will join the
set of players with zero demand in the next iteration and their final
allocation (let's call is $x^f$) will be $x^f_i = x_i + d_i$ for $i \in U
\setminus \hi$ and $x^f_\hi = x_\hi$. The set $S$ will be the players who will
have non-zero demand the next time we reach point \cp.

In order to use Lemma \ref{lemma:decomposition}, we note that the players
outside $T$ are $\hi$-unsaturated, since for $k \notin T$, $x+(0,d_{-\hi}) \leq
x+d =
x+(0, d_{-k})$, therefore: $f_{x+(0,d_{-\hi})}([n] \setminus k) \leq
f_{x+(0,d_{-k})}([n] \setminus k) = f_{x+(0,d_{-k})}([n]) - x_k$, since $k$ is
$k$-unsaturated. This means that: $f_{x+(0,d_{-\hi})}([n]) -
f_{x+(0,d_{-\hi})}([n] \setminus k)\geq f_{x+(0,d_{-k})}([n]) -
[f_{x+(0,d_{-k})}([n]) - x_k] = x_k$.
Therefore $k$ is $\hi$-unsaturated. Now, we can apply
Lemma \ref{lemma:decomposition}
that $f(S) + (x+d')(U) = f_{x+d'}(T) \leq f(T)$. Summing this with: $x([n]
\setminus T) = f([n]) -  f(T)$ we get: $x([n] \setminus T) + (x+d')(U) \leq
f([n]) - f(S)$. To see this holds with equality, let $x^f$ be the final
allocation of the algorithm and notice that: $f(S) \geq x^f(S) = f([n]) -
x^f([n] \setminus S) = f([n]) - [x([n] \setminus T) + (x+d')(U)]$. This shows
us that:
$$x^f([n]\setminus S) = x([n] \setminus T) + (x+d')(U) = f([n]) -
f(S)$$
which establishes that the invariant is preserved from one iteration to
another. 
\end{proofof}

\bibliographystyle{abbrv}
\bibliography{sigproc}

\begin{thebibliography}{10}

\bibitem{Aggarwal09}
G.~Aggarwal, S.~Muthukrishnan, D.~P{\'a}l, and M.~P{\'a}l.
\newblock General auction mechanism for search advertising.
\newblock In {\em WWW}, pages 241--250, 2009.

\bibitem{alaei10}
S.~Alaei, K.~Jain, and A.~Malekian.
\newblock Walrasian equilibrium for unit demand buyers with non-quasi-linear
  utilities.
\newblock {\em CoRR}, abs/1006.4696, 2010.

\bibitem{Ausubel_multi}
L.~M. Ausubel.
\newblock An efficient ascending-bid auction for multiple objects.
\newblock {\em American Economic Review}, 94, 1997.

\bibitem{Baisa}
B.~Baisa.
\newblock Auction design without quasilinear preferences.
\newblock Working Paper, 2013.

\bibitem{benoit_krishna}
J.-P. Benoit and V.~Krishna.
\newblock Multiple-object auctions with budget constrained bidders.
\newblock {\em Review of Economic Studies}, 68(1):155--79, January 2001.

\bibitem{Bhattacharya10}
S.~Bhattacharya, V.~Conitzer, K.~Munagala, and L.~Xia.
\newblock Incentive compatible budget elicitation in multi-unit auctions.
\newblock In {\em SODA}, pages 554--572, 2010.

\bibitem{Bikhchandani11}
S.~Bikhchandani, S.~de~Vries, J.~Schummer, and R.~V. Vohra.
\newblock An ascending vickrey auction for selling bases of a matroid.
\newblock {\em Operations Research}, 59(2):400--413, 2011.

\bibitem{Borgs05}
C.~Borgs, J.~T. Chayes, N.~Immorlica, M.~Mahdian, and A.~Saberi.
\newblock Multi-unit auctions with budget-constrained bidders.
\newblock In {\em ACM Conference on Electronic Commerce}, pages 44--51, 2005.

\bibitem{chawla11}
S.~Chawla, D.~L. Malec, and A.~Malekian.
\newblock Bayesian mechanism design for budget-constrained agents.
\newblock In {\em ACM Conference on Electronic Commerce}, pages 253--262, 2011.

\bibitem{che_gale}
Y.-K. Che and I.~Gale.
\newblock Standard auctions with financially constrained bidders.
\newblock {\em Review of Economic Studies}, 65(1):1--21, January 1998.

\bibitem{henzinger11}
R.~Colini-Baldeschi, M.~Henzinger, S.~Leonardi, and M.~Starnberger.
\newblock On multiple keyword sponsored search auctions with budgets.
\newblock In {\em ICALP (2)}, pages 1--12, 2012.

\bibitem{devanur12}
N.~R. Devanur, B.~Q. Ha, and J.~D. Hartline.
\newblock Prior-free auctions for budgeted agents.
\newblock In {\em EC}, pages 287--304, 2013.

\bibitem{dobzinski12}
S.~Dobzinski, R.~Lavi, and N.~Nisan.
\newblock Multi-unit auctions with budget limits.
\newblock {\em Games and Economic Behavior}, 74(2):486--503, 2012.

\bibitem{quantitative13}
S.~Dobzinski and R.~P. Leme.
\newblock Efficiency guarantees in auctions with budgets.
\newblock In {\em Proceedings of the 41st International Colloquium on Automata,
  Languages, and Programming}, ICALP'14, 2014.

\bibitem{dutting11}
P.~D\"{u}tting, M.~Henzinger, and I.~Weber.
\newblock An expressive mechanism for auctions on the web.
\newblock In {\em Proceedings of the 20th international conference on World
  wide web}, WWW '11, pages 127--136, New York, NY, USA, 2011. ACM.

\bibitem{fiat_clinching}
A.~Fiat, S.~Leonardi, J.~Saia, and P.~Sankowski.
\newblock Single valued combinatorial auctions with budgets.
\newblock In {\em ACM Conference on Electronic Commerce}, pages 223--232, 2011.

\bibitem{goel12}
G.~Goel, V.~S. Mirrokni, and R.~{Paes Leme}.
\newblock Polyhedral clinching auctions and the adwords polytope.
\newblock In {\em STOC}, pages 107--122, 2012.

\bibitem{goel13}
G.~Goel, V.~S. Mirrokni, and R.~{Paes Leme}.
\newblock Clinching auctions with online supply.
\newblock In {\em SODA}, 2013.

\bibitem{laffont_robert}
J.-J. Laffont and J.~Robert.
\newblock Optimal auction with financially constrained buyers.
\newblock {\em Economics Letters}, 52(2):181--186, August 1996.

\bibitem{malakhov_vohra}
A.~Malakhov and R.~V. Vohra.
\newblock Optimal auctions for asymmetrically budget constrained bidders.
\newblock Discussion Papers 1419, Northwestern University, Center for
  Mathematical Studies in Economics and Management Science, Dec. 2005.

\bibitem{maskin00}
E.~S. Maskin.
\newblock Auctions, development, and privatization: Efficient auctions with
  liquidity-constrained buyers.
\newblock {\em European Economic Review}, 44(4-6):667--681, May 2000.

\bibitem{morimoto_serizawa}
S.~Morimoto and S.~Serizawa.
\newblock Strategy-proofness and efficiency with nonquasi-linear preferences: A
  characterization of minimum price walrasian rule.
\newblock ISER Discussion Paper 0852, Institute of Social and Economic
  Research, Osaka University, Aug. 2012.

\bibitem{myerson-81}
R.~Myerson.
\newblock Optimal auction design.
\newblock {\em Mathematics of Operations Research}, 6(1):58--73, 1981.

\bibitem{pai_vohra}
M.~M. Pai and R.~Vohra.
\newblock Optimal auctions with financially constrained bidders.
\newblock Discussion papers, Northwestern University, Center for Mathematical
  Studies in Economics and Management Science, Aug 2008.

\bibitem{schrijver-book}
A.~Schrijver.
\newblock {\em Combinatorial Optimization - Polyhedra and Efficiency}.
\newblock Springer, 2003.

\bibitem{adwords_blog}
E.~Williams.
\newblock Target cpa bidding: A new way to meet your roi goals with conversion
  optimizer.
\newblock \url{
  http://adwords.blogspot.com.br/2010/05/target-cpa-bidding-new-way-to-meet-yo%
ur.h tml}, 2010.

\end{thebibliography}

%
%
\appendix

\section{Appendix: Missing Proofs}\label{appendix:proofs}

\begin{proofof}{Lemma \ref{lemma:multi_unit_average}}
 Let $\tilde{v}_1 \geq \tilde{v}_2 \geq \hdots \geq \tilde{v}_n$. The outcome
of VCG allocates all the units to player $1$ and nothing to the other players,
i.e., $x_1 = s$, $x_i = 0$ for $i \neq 1$. Also, $\pi_1 = \tilde{v}_2$ and
$\pi_i = 0$ for $i \neq 1$.

Let $(x',\pi')$ be a Pareto-improvement. Since the utility of $1$ improves, it
must be the case that: $v_1 x'_i - \pi'_1 \geq v_1 s - \tilde{v}_2 s$.
Therefore: $\pi'_1 \leq v_1 x'_1 - v_1 s + \tilde{v}_2 s$. Now, for each player
$i \neq 1$, $\pi_i \leq \beta_i x_i \leq \tilde{v}_2 x_i$. So: $\sum_i \pi'_i
\leq \tilde{v}_2 s + (\tilde{v}_2 - v_1) (s-x'_1)$. But since it is a
Pareto-improvement, $\sum_i \pi'_i \geq \sum_i \pi_i = \tilde{v}_2 s$, so $x'_1
= s$. Which implies that $x = x'$, $\pi = \pi'$, which is a contradiction.
\end{proofof}

\noindent \textbf{Note on the proof of Lemma \ref{lemma:multi_unit_average}}:
A alternative proof of Lemma \ref{lemma:multi_unit_average} is to note that for
multi-unit auctions and just average budget constraints, the polyhedral
clinching auction (Algorithm \ref{polyhedral-clinching-auction}) boils down to
VCG on $\min\{v_i,\beta_i\}$. Beyond multi-unit auctions, however, this is not
the case anymore.\\

\begin{proofof}{Lemma \ref{lemma:maximality}}
This is trivially true in the first time the algorithm reaches point
\cp, since at that point: $f([n]) = f_{x+d}([n]) = [f([n]) -
f([n]\setminus i)] + f([n]\setminus i) = x_i + f([n] \setminus i) =
f_{x+(0,d_{-i})}([n])$, after all, at that point $d_i = \infty$ for all $i$.
Now, we will show this invariant is preserved the next time point \cp is
reached again. 

Let $d'$ be the demand vector after the price of $\hi$ increased. Then clinched
amounts are: $\delta_i = f_{x+d'}([n]) - f_{x+(0,d'_{-i})}([n])$. The vector
$x$ is then updated to $x+\delta$ and $d'$ to $d'-\delta$. We want to show
that: $f_{(x+\delta)+(d'-\delta)}([n]) = f_{(x+\delta)+(0,d'_{-i} +
\delta_{-i})}([n])$. Which is :$f_{x+d'}([n]) =
f_{x +(\delta_i,d'_{-i})}([n])$. 

By Lemma \ref{lemma:auxiliary}, $f_{x+(0,d'_{-i})}([n]) = \min\{ f_{x+d'}([n]),
f_{x+d'}([n]\setminus i) + x_i \}$. If the minimum is the first term, then
$\delta_i = 0$ and we are done. If it is the second term, then: $\delta_i =
f_{x+d'}([n]) - f_{x+d'}([n]\setminus i) - x_i$, in which case:
$f_{x +(\delta_i,d'_{-i})}([n]) = \min \{f_{x +d'}([n]), f_{x
+d'}([n] \setminus i) + x_i + \delta_i \} = f_{x +d'}([n]) $.
\end{proofof}

\begin{proofof}{Lemma \ref{lemma:all_goods_sold}}
  See that this invariant is preserved through the execution of the algorithm.
It is trivially true in the beginning since $d_i = \infty$ for all $i$. Now, if
the invariant holds at point \cp, then by Lemma \ref{lemma:maximality}
$f([n]) = f_{x+d}([n]) = f_{x+(0,d_{-\hi})}([n])$. Now, when the price
increases, the demands are updated and clinching happens, the value of
$x+(0,d_{-\hi})$ doesn't change, since $x_\hi$ doesn't change and for $i \neq
\hi$, $x_i + d_i$ doesn't change, since $x_i$ is updated to $x_i + \delta_i$
and $d_i$ is updated to $d_i - \delta_i$. Therefore if $\tilde{x},\tilde{d}$
are the values of allocation and demands the next time point \cp is
reached, then: $f_{\tilde{x}+\tilde{d}}([n]) =
f_{\tilde{x}+\tilde{d}_{-\hi}}([n]) = f_{x+(0,d_{-\hi})}([n]) = f([n])$.
\end{proofof}

\begin{proofof}{Lemma \ref{lemma:self-saturation}}
Notice it is enough to show that $f_{x+(0,d_{-i})}([n]) = f_{x+(0,d_{-i})}([n]
\setminus i) + x_i$ since one can extend for every set $S \ni i$ using
submodularity, after all: $x_i \geq f_{x+(0,d_{-i})}(S) - f_{x+(0,d_{-i})}(S
\setminus i) \geq f_{x+(0,d_{-i})}([n]) - f_{x+(0,d_{-i})}([n] \setminus i) =
x_i$.

When point \cp is reached for the first time in the algorithm, the
invariant $f_{x+(0,d_{-i})}([n]) = f_{x+(0,d_{-i})}([n] \setminus i) + x_i$
holds trivially since $x_i = f([n]) - f([n] \setminus i) = f_{x+(0,d_{-i})}([n])
-  f_{x+(0,d_{-i})}([n] \setminus i)$, by Lemma \ref{lemma:all_goods_sold} and
the fact that at this point $d_i = \infty$ for all $i$.

Now we will show that the invariant is preserved between two consecutive
visits to point \cp. Say that $x,d$ is the allocation and demands in
point \cp just before we increase price $p_\hi$. Let $d'_\hi$ be the new
demand for $\hi$ and $d' = (d'_\hi, d_{-\hi})$. We want to show that:
$f_{x+(0,d'_{-i})}([n]) = f_{x+(0,d'_{-i})}([n] \setminus i) + x_i$. This is
trivial for $i = \hi$, since the expression is unaffected. For $i \neq \hi$, we
can use Lemma \ref{lemma:auxiliary}:
$$\begin{aligned} & f_{x+(0,d'_{-i})}([n])-  f_{x+(0,d'_{-i})}([n] \setminus i)
= \min\{f_{x+(0,d_{-i})}([n]), f_{x+(0,d_{-i})}([n] \setminus \hi) + x_\hi +
d'_\hi \} - \\ & \qquad \min\{f_{x+(0,d_{-i})}([n] \setminus i),
f_{x+(0,d_{-i})}([n] \setminus \{i,\hi\}) + x_\hi + d'_\hi \} \end{aligned} $$
Now we can analyze four cases based on which expression achieves the
minimum. One of those four cases is impossible, since by
submodularity: $f_{x+(0,d_{-i})}([n] \setminus \{i,\hi\}) - f_{x+(0,d_{-i})}([n]
\setminus i) \geq f_{x+(0,d_{-i})}([n]) - f_{x+(0,d_{-i})}([n] \setminus \hi)
$, the minimum can't be achived by $f_{x+(0,d_{-i})}([n])$ in the first and by
$f_{x+(0,d_{-i})}([n] \setminus \hi) + x_\hi + d'_\hi$ in the second. Now, we
proceed by analyzing the remaining cases:
\begin{itemize}
 \item first / first: $f_{x+(0,d'_{-i})}([n])-  f_{x+(0,d'_{-i})}([n] \setminus
i) = f_{x+(0,d_{-i})}([n] ) - f_{x+(0,d_{-i})}([n] \setminus i) = x_i$ by the
invariant.
 \item second / second: $f_{x+(0,d'_{-i})}([n])-  f_{x+(0,d'_{-i})}([n]
\setminus i) =  [ f_{x+(0,d_{-i})}([n] \setminus \hi) + x_\hi +
d'_\hi ]  - [f_{x+(0,d_{-i})}([n] \setminus \hi) + x_\hi + d'_\hi] \geq
f_{x+(0,d_{-i})}([n] ) - f_{x+(0,d_{-i})}([n] \setminus i) = x_i$. The
inequality in the other direction is trivial.
 \item second / first: in this case by the fact that the minimum is achieved by
the first term in the second expression, $f_{x+(0,d_{-i})}([n] \setminus i) \leq
f_{x+(0,d_{-i})}([n] \setminus \{i,\hi\}) + x_\hi + d'_\hi$, therefore:
$f_{x+(0,d'_{-i})}([n])-  f_{x+(0,d'_{-i})}([n]
\setminus i) =  f_{x+(0,d_{-i})}([n] \setminus \hi) + x_\hi +
d'_\hi - f_{x+(0,d_{-i})}([n] \setminus i) \geq f_{x+(0,d_{-i}}([n] \setminus
\hi) - f_{x+(0,d_{-i}}([n] \setminus \{i, \hi\}) \geq x_i$. The
inequality in the other direction is trivial.

This shows that the invariant holds even after we decrease the $d_\hi$. Before
we reach point \cp again, allocation and demands change because of
clinching. This, however, doesn't change the invariant, since clinching adds an
extra $\delta_i$ amount to $x_i$ but subtracts the same amount from $d'_i$, so
$x+d$ is constant. Therefore, clinching doesn't affect the invariant.
\end{itemize}

\end{proofof}


\end{document}